\documentclass{article}[12]
%%%%%%%%%%%%%%%%%%%%%%%%%%%%%%%%%%%%%%%%%%%%%%%%%%%%%%%%%%%%%%%%%%%%%%%%%%%%%%%%%%%%%%%%%%%%%%%%%%%%%%%%%%%%%%%%%%%%%%%%%%%%%%%%%%%%%%%%%%%%%%%%%%%%%%%%%%%%%%%%%%%%%%%%%%%%%%%%%%%%%%%%%%%%%%%%%%%%%%%%%%%%%%%%%%%%%%%%%%%%%%%%%%%%%%%%%%%%%%%%%%%%%%%%%%%%
%\usepackage{amsmath}
%\usepackage{amssymb}
%\usepackage{color}
%\usepackage{cite}
%\usepackage{comment}
%\setcounter{MaxMatrixCols}{10}
%\usepackage{epsfig}
%\usepackage{graphicx}
%\usepackage{soul}
%\usepackage{todonotes}

\usepackage{soul}
\usepackage{amsthm,amsmath,amssymb,epsfig}
\usepackage[mathscr]{eucal}
\usepackage{color}
\usepackage{soul}
\usepackage[table,xcdraw]{xcolor}
\usepackage{comment}
\usepackage{todonotes}
\usepackage{verbatim}
\usepackage{listings}
\usepackage[table]{xcolor}
\setlength{\arrayrulewidth}{0.5mm}
\setlength{\tabcolsep}{12pt}

\newtheorem{theorem}{Theorem}

\newtheorem{example}[theorem]{Example}

\newtheorem{remark}[theorem]{Remark}

\numberwithin{equation}{section} \numberwithin{theorem}{section}

\usepackage{color}
\usepackage[T1]{fontenc}

\newcommand{\blue}{\textcolor[rgb]{0.00,0.00,1.00}}

\title{\textcolor{blue}{Predicting failure times of coherent systems}}
\author{Jorge Navarro}

\author{Jorge Navarro$^{a,}$\thanks{Correspondence to: Jorge Navarro, Facultad
		de Matem\'aticas, Campus de Espinardo, Universidad de Murcia, 30100 Murcia, Spain. E-mail: \textit{jorgenav@um.es}. Telephone number: 34 868883509.
		Fax number: 34 868884182.}, Antonio Arriaza$^{b}$ and Alfonso Su\'arez-Llorens$^{b}$ \\
	$^a$%Department of Statistics and Operational Research, 
Universidad de Murcia, Murcia, Spain\\
$^b$Universidad de C\'adiz, C\'adiz, Spain.}

%\date{}

\begin{document}
	\maketitle
	
	\begin{abstract}
    \textcolor{blue}{The paper is focused on studying} how to predict the failure times of coherent systems from the early failure times of their components. Both the cases of independent and dependent components are considered by assuming that they are identically distributed (homogeneous components). The heterogeneous components' case can be addressed similarly but more complexly. The present study is for non-repairable systems, but the information obtained could be used to decide if a maintenance action should be carried out at \blue{time $t$.} Different cases are considered regarding the information available \blue{at time} $t$. We use quantile regression techniques to predict the system failure times and to provide prediction intervals. The theoretical results are applied to specific system structures in some illustrative examples.
	\end{abstract}
	
	\textbf{Keywords:} Copula functions $\cdot$ Coherent systems $\cdot$ Distortion functions $\cdot$  Residual lifetimes $\cdot$ Quantile regression.
	
	%MSC 60E15, 62N05.
	
\section{Introduction}

Coherent systems are fundamental tools widely used in Reliability Theory. %\red{Jorge dice: Me gustaba más como estaba antes con la definición de sistema aquí ya que hay gente que no sabe lo que es un sistema y lo quer es su función de estrtuctura.}
\textcolor{blue}{A (binary) system is a map $\phi:\{0,1\}^n\to\{0,1\}$ where $x_1,\dots,x_n$ represent the states of the  components ($x_i=1$ if the component $i$ is working, $x_i=0$ if it has failed) and $\phi(x_1,\dots,x_n)$ represents the state of the system. The map $\phi$ is also known as the structure function of the system. A system consisting of $n$ components is considered coherent if it satisfies two conditions: its structure function is increasing, and all components are relevant. The first condition implies that if the system operates with $m$ components, where $m < n$, then the system cannot transition to a failure state by repairing any of the remaining $n-m$ components. The second condition states that for each component, there exists at least one configuration of states (working/not working) for the remaining components such that the system's state coincides with the state of such a component. The main properties of coherent systems can be seen in  \cite{Alfonso,BP75,N21} and in the references therein.} 
%\blue{A (binary) system is a map $\phi:\{0,1\}^n\to\{0,1\}$ where $x_1,\dots,x_n$ represent the states of the  components ($x_i=1$ if it is working, $x_i=0$ if it has failed) and $\phi(x_1,\dots,x_n)$ represents the state of the system.  A system is {coherent} if $\phi$ is increasing and it is not constant in $x_i$, that is,   the $i$th component is relevant for the system for all $i=1,\dots,n$. The main properties of coherent systems can be seen in  \cite{Alfonso,BP75,N21} and in the references therein.}

A relevant task in \textcolor{blue}{Reliability Theory} is to define protocols to \blue{predict and} extend the lifetimes of coherent systems. This includes maintenance and redundancy policies. One of these policies is the condition-based maintenance, where coherent systems incorporate sensors in critical components to monitor the system's state. \textcolor{blue}{For example, a system with pipelines can have installed some sensors to measure the temperature and pressure of each pipeline. The deviation of such magnitudes from the normal working conditions can produce early failures of the system. Thus,} the collected information from the sensors can be used to perform predictive maintenance. These techniques are useful to predict when it is suitable to perform a maintenance action on the system and, therefore, to prevent a future failure, \blue{see \cite{AHB23,NAS19,Patward}.} 

Several techniques have been developed to obtain the reliability function and the mean time to failure of a system, \textcolor{blue}{see, for example, \cite{ACB23, AHB23,S07}.} These functions \blue{are} helpful to predict the lifetime of a system and its failure probability. However, generally, it is not easy to predict system failure times from early component failures. This task is especially complex when the components are dependent. Many authors assume independence among the system components \blue{(see, e.g.,  \cite{BP75,E17,S07,YFN04} and the references therein)} but, in several practical situations, it is not a realistic assumption because the components usually share a common environment or load. 

This paper aims to solve this task by providing different tools to get predictions for the system failure time from the information available at a given time $t\geq 0$. \textcolor{blue}{The system lifetime will be represented by the random variable $T$, and the components' lifetimes will be modeled by the random variables $X_1,\dots,X_n$}. Then, it is well known (see, e.g., \cite{N21,NASS16} or Section 2) that the system reliability function $\bar F_T(x)=\Pr(T>x)$ can be obtained as
$$\bar F_T(x)= \bar Q( \bar F_1(x),\dots,\bar F_n(x))$$
for all $x\geq 0$, where $\bar F_i(x)=\Pr(X_i>x)$ for $i=1,\dots, n$ are the components' reliability functions and $\bar Q$ is a distortion function which depends on the system structure and the copula associated to the components' lifetimes. This representation is helpful because, given a system structure with a fixed dependence among the components, we can compute the system reliability function obtained from different kinds of components.

When the system is new, that is, at time $t=0$, the expected system failure time is 
$$E(T)=\int_0^\infty \bar F_T(x) dx =\int_0^\infty \bar Q( \bar F_1(x),\dots,\bar F_n(x)) dx.$$
We will see other options to predict $T$  later.   However, at a given time $t>0$, we have different options to predict the system failure time $T$. These predictions will depend on the information available \blue{at time $t$.} For example,  we may know that the system is working, that is, \textcolor{blue}{the event $A_1=\{T>t\}$ occurs} or we may know that all the components are working, that is, $A_2=\{X_1>t,\dots,X_n>t\}$ \textcolor{blue}{is satisfied}. The residual lifetimes of the system under both assumptions
$(T-t|A_i)$, $i=1,2$ were studied in \cite{N18}.  Another option is to know that some specific components have failed before $t$ and that the others are still working, that is, 
$$A_3=\{X_{i_1}=t_{1},\dots,X_{i_r}=t_r,X_i>t \text{ for } i\notin  \{i_1,\dots,i_r \}\}$$
for $0<t_1<\dots<t_r\leq t$. \blue{This case was partially studied in \cite{ND17}.}

In this paper, we study other situations where we do not know which components failed first and we want to schedule a protocol to be performed when the first component failure occurs \blue{(some systems are equipped with warning alarms at some components or set of components).} In this case, we must study $(T|X_{1:n}=t)$,  where $X_{1:n}=\min(X_1,\dots, X_n)$ represents the first failure of the components. At time $t$, we may know which component fails but this information is not available a priori. Other options are studied as well. For example, we study what happens when we know the second failure time $X_{2:n}$, where $X_{1:n},\dots, X_{n:n}$ represent the ordered failure times of the components  (order statistics). Another option is to assume that we know both failure times, \blue{$X_{1:n}=t_1$ and $X_{2:n}=t$. The results for other failure times are analogous.} In these cases, we may also assume $T>t$ or that $T\geq t$. If the system has already failed \blue{at time} $t$, that is, $T<t$, then the inactivity time of the system  $(t-T|A)$ can be predicted using techniques similar to the ones developed here under different \blue{assumptions.} Some of these cases were studied in \cite{NC19,NPL17}. The framework developed in this manuscript can also predict the lifetimes \blue{(or at least provide a lower bound for a fixed percentage of them)} of coherent systems formed by \blue{modules.  In such a case, we would assume that we know the failure times of the modules instead of the components, see \cite{Torrado21} for a brief introduction to coherent systems composed of modules. This approach can be used to study complex systems with many components.}

To provide such predictions for $T$, under the considered assumptions, we will use quantile regression (QR) techniques that also give prediction bands \blue{for the system failure time. This approach has not been used in the references cited above. It was used in \cite{NB22} to predict $k$-out-of-$n$ systems.} When we only have training data available, we could use the empirical QR techniques to estimate these QR curves or \blue{to estimate} the parameters in the copula and/or in the components' distributions, see \blue{Example \ref{ex1}} and \cite{K05, MN22,Ne06,TLS06}.

%The rest of the paper is organized as follows.  
The remainder of the paper is structured as follows. Preliminary results and notation are provided in Section \ref{Sec2}. The main tools for the considered assumptions are given in Section \ref{Sec3}. The examples are studied in Section \ref{Sec4}. Finally,  the conclusions and the main tasks for future research projects are expanded in Section \ref{Sec5}.

\section{Notation and preliminary results}\label{Sec2}

In the paper, `increasing' and `decreasing' mean `non-decreasing' and `non-increasing', respectively.  Whenever an expectation, a conditional distribution, or a partial derivative is used, we assume that it exists. The expression $\partial_i \psi$ will denote the partial derivative of a function $\psi$ with respect to its $i$th variable. Analogously, \textcolor{blue}{$\partial_{i,j} \psi$} represents $\partial_i \partial_j \psi$ and so on.

\blue{Let $\phi$ be a coherent system with $n$ components.}
A subset $P$ of $\{1,\dots,n\}$ is a {path set} for $\phi$ if $\phi(x_1,\dots,x_n)=1$ when $x_i=1$ for all $i\in P$, that is, the system functions when all the components in $P$ work.  A path set is a {minimal path set} if it does not contain other path sets. 
The structure function $\phi$ can be written in terms of the minimal path sets as 
\begin{equation}\label{MPS1}
    \phi(x_1,\dots, x_n)=\max_{i=1,\dots,r} \min_{j\in P_i} x_j,
\end{equation}
where $P_1,\dots, P_r$ represent the minimal path sets of the system (see, e.g., \cite{BP75}, p.\ 12). 
By using this representation for the system, its lifetime $T$ can be obtained from the  component lifetimes $X_1,\dots,X_n$ as 
$$T=\phi(X_1,\dots, X_n)=\max_{i=1,\dots,r} \min_{j\in P_i} X_j,$$
where here $\phi:[0,\infty)^n\to[0,\infty)$ is the extension of the system structure function to $[0,\infty)$ obtained from \eqref{MPS1}.

The lifetime of the series system formed with the components in a set $P\subseteq\{1,\dots,n\}$, will be denoted by  $X_P=\min_{j\in P} X_j$. Its reliability (or survival)  function is
\begin{equation}\label{FP}
\bar F_P(t)=\Pr(X_P>t)=\Pr(\cap_{j\in P} \{X_j>t\}).
\end{equation}
Hence, the reliability function of the system $\bar F_T(t)=\Pr(T>t)$ is
\begin{equation}\label{MPS}
	\bar F_T(t)=\sum_{i=1}^r \bar F_{P_i}(t) - \sum_{i=1}^{r-1} \sum_{j=i+1}^{r}  \bar F_{P_i\cup P_j}(t)+\dots +(-1)^{r+1}\bar F_{P_1\cup \dots \cup P_r}(t)
\end{equation}
for all $t\geq 0$.  This expression for the system reliability is called the {minimal path set representation} (see, e.g., \cite{N21}, p.\ 37). Note that it is a linear combination of series system reliability functions. 

The components can be dependent, and this dependence structure will be represented by a {survival copula} $\widehat C$ that is  used to represent their joint reliability function as 
\begin{equation}\label{SC}
\Pr(X_1>x_1,\dots, X_n>x_n)=\widehat C( \bar F_1(x_1),\dots, \bar F_n(x_n))
\end{equation}
(see, e.g., \cite{Ne06}, p. 32, or \cite{DS16}, p. 33), where $\bar F_i(t)=\Pr(X_i>t)$ for $i=1,\dots,n$. The independence case is represented by the product copula 
$$\widehat C(u_1,\dots,u_n)=\Pi (u_1,\dots,u_n)=u_1 \cdots u_n$$ for $u_1, \dots, u_n\in[0,1]$.

\quad

The following notation will be used in the paper. 
For $\mathbf{u}=(u_1,\dots,u_n)$ and $P\subseteq\{1,\dots,n\}$, $\mathbf{u}_P:=(u^P_1,\dots,u^P_n)$ with $u^P_i=u_i$ for $i\in P$ and $u^P_i=1$ for $i\notin P$. For example, if $\mathbf{u}=(u_1,u_2,u_3)$, then 
$\mathbf{u}_{\{1,2\}}=(u_1,u_2,1)$. Then, for a given $n$-dimensional copula $C$,  we define 
$C_P(\mathbf{u}):=C(\mathbf{u}_P)$. Note that this is the copula for the marginal distribution of the random variables included in $P$.  Thus, \blue{from \eqref{FP} and \eqref{SC},}  we can obtain the reliability function of the  series system  $X_P$ as 
$$\bar F_P(t)= \widehat C_P (\bar F_1(t),\dots,\bar F_n(t))$$
for all $t\geq 0$. By using this expression in \eqref{MPS}  the system reliability function can be  written as 
\begin{equation}\label{DR}
\bar F_T(t)=\bar Q (\bar F_1(t),\dots,\bar F_n(t)).
\end{equation}
This expression shows that it is a distortion of the component reliability functions with the distortion function 
$$\bar Q (\mathbf{u})= \sum_{i=1}^r \widehat C_{P_i}(\mathbf{u}) - \sum_{i=1}^{r-1} \sum_{j=i+1}^{r}  \widehat C_{P_i\cup P_j}(\mathbf{u})+\dots +(-1)^{r+1}\widehat C_{P_1\cup \dots \cup P_r}(\mathbf{u}).$$
This distortion  only depends on the structure of the  system  (its minimal path sets) and the dependence structure between the components (their survival copula), see \cite{N21}, p.\  60. 

If the lifetimes of the components  are identically distributed (ID), then \blue{the distortion representation \eqref{DR}} can be reduced to 
\begin{equation}\label{dis1}
\bar F_T(t)=\bar q (\bar F(t))
\end{equation}
for all $t\geq 0$, where $\bar q (u)=\bar Q(u,\dots,u)$ for $u\in[0,1]$ is a univariate distortion and $\bar F$ is the common reliability function of the component lifetimes.

\quad

We need the following additional notation: 
For  $P,P^\ast\subseteq\{1,\dots,n\}$ with $P\cap P^\ast=\emptyset$, $(u,v)_{P,P^\ast}:=(u_1,\dots,u_n)$ with $u_i=u$ for $i\in P$, $u_i=v$ if $i\in P^\ast$ and $u_i=1$ for $i\notin P\cup P^\ast$. Then, for an $n$-dimensional copula $\widehat C$  we define $\widehat C_{P,P^\ast}:[0,1]^2\to [0,1]$ as 
$$\widehat C_{P,P^\ast}(u,v):=\widehat C((u,v)_{P,P^\ast}).$$ 
Therefore, if $X_1,\dots,X_n$ are ID with a reliability function $\bar F$ and a  survival copula $\widehat C$, then
\begin{equation}\label{PP}
\Pr(X_P>x,X_{P^\ast}>y)=\widehat C_{P,P^\ast}(\bar F(x) ,\bar F(y))
\end{equation}
for all $ x, y\geq 0$. 
This representation is a bivariate distortion representation and $ \widehat C_{P,P^\ast}$ is a bivariate distortion function. The multivariate distortion representations were introduced and studied recently in \cite{NCLD21}. They allow to represent the joint distribution function $\mathbf{F}$ of a random vector $(X_1,\dots,X_n)$ as
$$ \mathbf{F}(t_1,\dots,t_n)= D(G_1(t_1),\dots,G_n(t_n))$$
for all $t_1,\dots,t_n$, where $G_1,\dots,G_n$ are some univariate distribution functions and $D:[0,1]^n\to[0,1]$ is a continuous {multivariate distortion function}. The multivariate distortion functions are the restrictions to the set $[0,1]^n$ of multivariate distribution functions with support included in $[0,1]^n$ (see \cite{NCLD21}). A similar representation holds for the joint reliability function with another distortion function. These representations are analogous to the representations based on copulas (and so they have similar properties), but note that here $G_1,\dots,G_n$ are not necessarily equal to the marginal distribution functions of $X_1,\dots,X_n$, and that $D$ is not necessarily a copula function. This happens in \eqref{PP}, where $\bar F$ is not necessarily equal to the reliability functions of the series systems $X_P$ and $X_{P^\ast}$.    

\section{Main results} %for systems with homogeneous components}
\label{Sec3}

As we have mentioned in the introduction section, we want to predict the failure time of a system $T=\phi(X_1,\dots,X_n)$ from the information available \blue{at time} $t\geq 0$ for some related systems (or structures) $T_j=\phi_j(X_1,\dots,X_n)$ for $j=1,\dots,k$ built from the same components. To this end, we assume that $(X_1,\dots,X_n)$ has an absolutely continuous joint distribution and that the component lifetimes are homogeneous, that is, they have a common reliability function $\bar F$. We consider the following cases of potential practical interest:

\quad

{\bf Case I:} $T_1<T$.

This is the most simple and typical case in practice. Here we have information about an early failure which always happens before the system failure. For example, $T_1$ could be the first component failure, that is, $$T_1=X_{1:n}=\min(X_1, \dots,X_n).$$ 
Observe that we also assume that $T>T_1$ with probability one, that is, \blue{the system does not fail upon the first component failure at time $T_1$.} In this case, as we assume that $(X_1,\dots,X_n)$ has an absolutely continuous joint distribution, then so has $(T_1,T)$.

To perform predictions about $T$, we will use the concept of bivariate distorted distribution, studied in \cite{NCLD21} to represent the joint reliability function of $(T_1,T)$. Then, we will use this representation to get the quantile curves for the conditional random variable $(T|T_1=t)$. The representation for the case of identically distributed (ID) components is obtained in the following theorem.

\begin{theorem}\label{th1}
	If $T_1$ and $T$ are the lifetimes of two coherent systems satisfying $T_1<T$, both systems are based on the same component lifetimes with common reliability function $\bar F$ and $(T_1,T)$ has a joint absolutely continuous distribution, then   there exists a bivariate distortion function $\widehat D:[0,1]^2\to[0,1]$ such that the joint reliability  $\bar G(x,y):= \Pr(T_1>x,T>y)$ can be represented as 
\begin{equation}\label{G1}
\bar G(x,y)=\widehat D(\bar F(x),\bar F(y))
\end{equation}
for all $x,y$. Moreover, the reliability function of $(T|T_1=t)$ is  
\begin{equation}\label{G1b}
\bar G_{T|T_1}(y|t):=\Pr(T>y|T_1=t)=\frac {\partial_1 \widehat D (\bar F(t),\bar F(y)) - \partial_1 \widehat D (\bar F(t),0^+)}{\partial_1 \widehat D (\bar F(t),1) }
\end{equation}
for $y\geq t$, where $\partial_1 \widehat D (u,0^+):=\lim_{v\to 0^+} \partial_1 \widehat D (u,v)$. 
\end{theorem}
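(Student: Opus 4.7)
The plan is to prove the two assertions separately: first establish the distortion representation \eqref{G1}, then use it to derive \eqref{G1b} by computing the joint and marginal densities involved.

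For \eqref{G1}, I would use the minimal path set representation \eqref{MPS1} to rewrite
$$\{T_1>x\}\cap\{T>y\}=\Bigl(\bigcup_{i}\bigcap_{k\in P_i^{(1)}}\{X_k>x\}\Bigr)\cap\Bigl(\bigcup_{j}\bigcap_{l\in P_j}\{X_l>y\}\Bigr),$$
where $P_i^{(1)}$ and $P_j$ are the minimal path sets of $\phi_1$ and $\phi$, and expand by inclusion-exclusion. This yields $\bar G(x,y)$ as a finite integer combination of probabilities of the form $\Pr\bigl(\bigcap_{k\in A}\{X_k>x\}\cap\bigcap_{l\in B}\{X_l>y\}\bigr)$. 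By \eqref{SC} and the ID hypothesis, each such probability equals $\widehat C(\mathbf{v})$, where $v_i=\bar F(x)$ for $i\in A\setminus B$, $v_i=\bar F(y)$ for $i\in B\setminus A$, $v_i=\min(\bar F(x),\bar F(y))$ for $i\in A\cap B$, and $v_i=1$ otherwise; in every case the term depends on $(x,y)$ only through $(\bar F(x),\bar F(y))$. Summing therefore produces a function $\widehat D:[0,1]^2\to[0,1]$ with $\bar G(x,y)=\widehat D(\bar F(x),\bar F(y))$, and that $\widehat D$ is a bivariate distortion function in the sense of \cite{NCLD21} follows because $\bar G$ is a genuine bivariate reliability function and $\bar F$ is continuous (by absolute continuity of $(X_1,\dots,X_n)$); equivalently, $\widehat D$ may be taken to be the joint reliability of $(\bar F(T_1),\bar F(T))$ on $[0,1]^2$.

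For \eqref{G1b}, letting $y\downarrow 0$ in \eqref{G1} gives the marginal $\Pr(T_1>t)=\widehat D(\bar F(t),1)$, whence $f_{T_1}(t)=f(t)\,\partial_1\widehat D(\bar F(t),1)$. Differentiating \eqref{G1} in both arguments (legitimate under absolute continuity) gives the joint density
$$g(t,y)=\frac{\partial^2\bar G}{\partial t\,\partial y}(t,y)=f(t)f(y)\,\partial_{1,2}\widehat D(\bar F(t),\bar F(y)),$$
so that
$$\bar G_{T|T_1}(y|t)=\frac{1}{f_{T_1}(t)}\int_y^\infty g(t,z)\,dz=\frac{\int_y^\infty f(z)\,\partial_{1,2}\widehat D(\bar F(t),\bar F(z))\,dz}{\partial_1\widehat D(\bar F(t),1)}.$$
The change of variables $s=\bar F(z)$, $ds=-f(z)\,dz$ (so $s$ runs from $0$ to $\bar F(y)$), converts the remaining integral into $\int_0^{\bar F(y)}\partial_{1,2}\widehat D(\bar F(t),s)\,ds=\partial_1\widehat D(\bar F(t),\bar F(y))-\partial_1\widehat D(\bar F(t),0^+)$, producing \eqref{G1b}.

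The main obstacle I expect is in the first step: verifying that the function $\widehat D$ read off from the inclusion-exclusion expansion genuinely qualifies as a bivariate distortion function, particularly when the range of $\bar F$ is a proper subset of $[0,1]$ and $\widehat D$ has to be specified off that range. The cleanest resolution is to define $\widehat D$ directly through the joint reliability of $(\bar F(T_1),\bar F(T))$ and to appeal to the general framework of \cite{NCLD21}. Once \eqref{G1} is available, the derivation of \eqref{G1b} is essentially routine calculus; the only mild care needed is the interchange of differentiation and integration, which is covered by the absolute continuity hypothesis.
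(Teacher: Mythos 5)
Your proof is correct and, for the representation \eqref{G1}, essentially the paper's: both expand $\{T_1>x\}\cap\{T>y\}$ by inclusion--exclusion over the minimal path sets of the two structures and reduce each resulting term to an evaluation of the survival copula at arguments built from $\bar F(x)$ and $\bar F(y)$. The one small divergence is how you treat indices common to both path sets: you absorb them via $\min(\bar F(x),\bar F(y))$, giving a single formula valid for all $(x,y)$, whereas the paper restricts to the region $x<y$, discards the redundant conditions ($P^\ast\mapsto P^\ast-P$), and invokes the bivariate distortion \eqref{PP} for disjoint sets; both devices are valid and yield the same $\widehat D$ piecewise. For \eqref{G1b} you go further than the paper, which simply cites Proposition 7 of \cite{NCLD21}: your direct computation (joint density $f(t)f(y)\,\partial_{1,2}\widehat D(\bar F(t),\bar F(y))$, marginal density $f(t)\,\partial_1\widehat D(\bar F(t),1)$, then the substitution $s=\bar F(z)$) is exactly the content of that proposition and makes the argument self-contained; the only cosmetic slip is calling $\widehat D$ the ``joint reliability'' of $(\bar F(T_1),\bar F(T))$ when it is its joint distribution function restricted to $[0,1]^2$, which is what the definition of a bivariate distortion in \cite{NCLD21} actually requires.
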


\begin{proof}
As we know that $T_1<T$, then
$$\bar G(x,y)= \Pr(T_1>x,T>y) =\Pr(T_1>x)=\bar F_{T_1}(x)=\bar q_{T_1}(\bar F(x))$$
for all $0\leq y\leq x$, where $\bar q_{T_1}$ is the distortion function of the system $T_1$ in representation \eqref{dis1}. 

Let us assume now  $0\leq x<y$ and let $P_1, \dots, P_r$ represent the minimal path sets of the system. Then 
$$\bar G(x,y)= \Pr(T_1>x,T>y)=\Pr\left(T_1>x,\cup_{i=1}^r X_{P_i}>y\right).$$
Hence, by using  the inclusion-exclusion formula we get
\begin{align*}
\bar G(x,y)&= \sum_{i=1}^r \Pr(T_1>x,X_{P_i}>y) - \sum_{i=1}^{r-1} \sum_{j=i+1}^{r}  \Pr(T_1>x,X_{P_i\cup P_j}>y)+\dots \\
&\quad +(-1)^{r+1} \Pr(T_1>x,X_{P_1\cup \dots \cup P_r}>y)
\end{align*}
for $0\leq x<y$. 

Let us assume now that  the minimal path sets of $T_1$ are $P^\ast_1, \dots,P^\ast_s$. Then, for $P\subseteq \{1,\dots,n\}$, by using again the inclusion-exclusion formula we get
\begin{align*}
\Pr(T_1>x,X_{P}>y)
&= \sum_{i=1}^s \Pr(X_{P_i^\ast}>x,X_{P}>y) \\
&\quad- \sum_{i=1}^{s-1} \sum_{j=i+1}^{s}  \Pr(X_{P_i^\ast\cup P_j^\ast }>x,X_{P}>y)+\dots\\ &\quad+(-1)^{s+1} \Pr(X_{P_1^\ast\cup \dots \cup P_s^\ast }>x,X_{P}>y).
\end{align*}
Finally, we note that  for $P,P^\ast\subseteq \{1,\dots,n\}$,
$$\Pr(X_{P^\ast}>x,X_{P}>y)=\Pr(X_{P^\ast- P}>x,X_{P}>y)$$
for all $0\leq x<y$, where $P^\ast-P=P^\ast\cap P^c$ and $P^c$ is the complement  set of $P$ in $\{1,\dots,n\}$. Therefore, as $P^\ast-P$ and $P$ are disjoint sets,  from \eqref{PP}, we get
$$\Pr(X_{P^\ast}>x,X_{P}>y)=\widehat C_{P^\ast,P}(\bar F(x),\bar F(y))$$
which  jointly with the two preceding inclusion-exclusion representations leads to  \eqref{G1} where $\widehat D$ depends on $\widehat C$ and the minimal path sets of both systems.   This representation holds for any reliability function $\bar F$. So $\widehat D$ is a proper bivariate distortion function (see \cite{NCLD21}).

The conditional distribution in \eqref{G1b} can be obtained from \eqref{G1} following the steps of  Proposition 7 in \cite{NCLD21}.  %\red{Incluir la demostraciÃÂ³n?}
\end{proof}

\quad

Note that \eqref{G1} is not a copula representation since $\bar F$ is neither the reliability function of $T_1$ nor of $T$. Also, note that the proof of Theorem \ref{th1} shows how to get the distortion function $\widehat D$. In many cases $\partial_1 \widehat D (u,0^+)=0$ holds  and then \eqref{G1b} can be simplified to 
\begin{equation}\label{G1c}
\bar G_{T|T_1}(y|t)=\frac {\partial_1 \widehat D (\bar F(t),\bar F(y))}{\partial_1 \widehat D (\bar F(t),1) }
\end{equation}
which is similar to the expression obtained for copulas. This expression can be used to both compute the conditional expectation as
$$\tilde m(t)=E(T|T_1=t)= \int_0^\infty  \bar G_{T|T_1}(y|t) dy= \int_0^\infty \frac {\partial_1 \widehat D (\bar F(t),\bar F(y))}{\partial_1 \widehat D (\bar F(t),1) }dy,$$
and to get the quantiles of $(T|T_1=t)$. For the latter, we will need the inverse function of $\bar G_{T|T_1}(y|t)$, denoted as $\bar G^{-1}_{T|T_1}(w|t)$, \blue{which is} obtained by solving in $y$ the equation
$$\bar G_{T|T_1}(y|t)=w$$
for $0<w<1$. Note that here we can use analytical or numerical methods  to solve this equation. \blue{Alternatively, we can}  plot the levels curves of the function $\bar G_{T|T_1}(y|t)$ for different values of $w$. 
Thus the curve $m$ for the median regression  is obtained with $w=0.5$ as 
$$m(t)=\bar G^{-1}_{T|T_1}(0.5|t)$$
for $t\geq 0$. Analogously, the centered  prediction band for $T$ at level $90\%$ is obtained with $w=0.05$ and $w=0.95$ as 
$$I_{90}=\left[\bar G^{-1}_{T|T_1}(0.95|t),\bar G^{-1}_{T|T_1}(0.05|t)\right].$$ 
Of course, $\Pr(T\in I_{90}|T_1=t)=0.90$ for all $t\geq 0$. 
Other prediction bands can be obtained similarly. The median regression curve is an excellent alternative to the conditional expectation, and the prediction bands allow us to give more accurate predictions. Examples \ref{ex1} and \ref{ex2} show how to apply this procedure.

%\quad

\quad

{\bf Case II:} $T_1\leq T$.

This is the most complex case because $(T_1, T)$ has a singular part over the line $T=T_1$. In practice, two options can be considered. In the first one, we are at \blue{time $t>0$} and we know that $T_1=t$ and that $T>t$. Note that if $T=t$, we do not need to predict $T$. In the second case, we are at time zero, and we want to know a priori what will happen when the failure of $T_1$ occurs at \blue{a future time} $t>0$. This case includes when both lifetimes coincide, that is, $T_1=T=t$. Let us see how these cases can be managed.

\quad

{\bf Case II.a:} $T_1=t<T$.

First, we note that the joint reliability function of $(T_1,T)$ can be written as in  \eqref{G1} for this case as well (see the proof of Theorem \ref{th1}). 
Now we might have a singular part in $T=T_1$. However, if the components have an absolutely continuous joint distribution, then the joint distribution of $(T_1,T)$ in the set  $T>T_1$ is absolutely continuous as well. Then \eqref{G1b} holds for $y>t\geq 0$ and can be completed by adding that $\bar G_{T|T_1}(y|t)=1$ for $0\leq y\leq t$. However, note that, in this case
$$\alpha(t):=\Pr(T>t|T_1=t)=\lim_{y\to t^+} \bar G_{T|T_1}(y|t)$$ can be less than $1$. Hence,
\begin{equation}\label{G1bII}
\Pr(T>y|T_1=t<T)=\frac{\Pr(T>y|T_1=t)}{\Pr(T>t|T_1=t)}=\frac {\partial_1 \widehat D (\bar F(t),\bar F(y)) - \partial_1 \widehat D (\bar F(t),0^+)}{\alpha(t)\partial_1 \widehat D (\bar F(t),1) }
\end{equation}
for $y>t$, being one for $0\leq y\leq t$.  This expression can be used to get \blue{the median regression curve $m$}  by solving the equation obtained for the value $0.5$, that is, 
$$\Pr(T>y|T_1=t<T)=0.5.$$
Note that $m(t)>t$ for all $t\geq 0$. 
The prediction bands are obtained similarly.

\quad

{\bf Case II.b:} $T_1=t\leq T$.

This case is actually straightforward, and we can directly use the reliability function given in \eqref{G1b} that now might have a jump at $t$, that is, it might have a mass $\Pr(T=t|T_1=t)=1-\alpha(t)$  at time $t$. In this case, it is better to use bottom prediction bands instead of centered ones. It might also happen that the median regression curve satisfies $m(t)=t$ for some values of $t$.

\quad 

Examples \ref{ex3} and  \ref{ex4} show how to manage cases II.a and II.b.

\quad

{\bf Case III:} $T_1<T_2<T$.

Here the purpose is to use all the information available. We consider a simple case where we know a first failure \blue{at time} $t_1\geq 0$, that is, $T_1=t_1$. Then, we know a second failure $T_2=t_2$ for $t_2\geq t_1$, and we assume $T>T_2$ (with probability one). The other options can be solved similarly (including the case $k>2$).

Proceeding as in the preceding cases, it can be proved that if the components are ID$\sim \bar F$, then the joint reliability of $(T_1,T_2,T)$ can be written as
$$\bar G(t_1,t_2,t)=\widehat D(\bar F(t_1),\bar F(t_2),\bar F(t))$$  
for all $t_1,t_2,t$. Then we assume that this joint reliability is absolutely continuous (e.g. we assume that $\Pr(T_1<T_2<T)=1$) and we obtain its probability density function (PDF) as
 $$g(t_1,t_2,t)=f(t_1)f(t_2)f(t)\partial_{1,2,3}\widehat D(\bar F(t_1),\bar F(t_2),\bar F(t))$$  
 for all $0\leq t_1\leq t_2\leq t$. In a similar manner,  the joint reliability function of $(T_1,T_2)$ can be written as 
 $$\bar G_{1,2}(t_1,t_2)=\bar G(t_1,t_2,0)=\widehat D(\bar F(t_1),\bar F(t_2),1)$$  
 for all $t_1,t_2,t$. Then,  its PDF is
 $$g_{1,2}(t_1,t_2)=f(t_1)f(t_2)\partial_{1,2}\widehat D(\bar F(t_1),\bar F(t_2),1)$$  
 for all $0\leq t_1\leq t_2$. Hence, the PDF of $(T|T_1=t_1,T_2=t_2)$ is 
 $$g_{3|1,2}(t|t_1,t_2)=\frac{g(t_1,t_2,t)}{g_{1,2}(t_1,t_2)}=\frac {\partial_{1,2,3}\widehat D(\bar F(t_1),\bar F(t_2),\bar F(t))}{\partial_{1,2}\widehat D(\bar F(t_1),\bar F(t_2),1)}f(t)$$
for $0\leq t_1\leq t_2\leq t$ such that $f(t_1)f(t_2) \neq0$, and $\partial_{1,2}\widehat D(\bar F(t_1),\bar F(t_2),1)\neq0$. Therefore, the conditional reliability function is 
\begin{equation}\label{GcaseIII}
\bar G_{3|1,2}(t|t_1,t_2)=\frac {\partial_{1,2}\widehat D(\bar F(t_1),\bar F(t_2),\bar F(t))-\partial_{1,2}\widehat D(\bar F(t_1),\bar F(t_2),0^+) }{\partial_{1,2}\widehat D(\bar F(t_1),\bar F(t_2),1)}
\end{equation}
for $0\leq t_1\leq t_2\leq t$ (one for $0\leq t<t_2$). Examples \ref{ex5} and \ref{ex6} show how to use this expression to get quantile regression predictions for $T$ by computing the inverse function of $\bar G_{3|1,2}(t|t_1,t_2)$.

\textcolor{blue}{\begin{remark}
 All cases considered in this paper can be applied to any pair of coherent systems $T_1$ and $T$ since representation \eqref{G1} holds for an appropriate distortion function $\widehat D$. For example, one reviewer suggested studying the case in which the system is equipped with a warning alarm at the second (or the third) component failure, that is, $T_1=X_{2:n}$. This case is studied in Example \ref{ex5} for $k$-out-of-$n$ systems. It is important to note, however, that the procedures proposed here become more difficult as the system's complexity escalates. One alternative to handle these complex systems is to consider groups of components working together as modules. Hence, the system is simplified and the new structure function depends only on the considered modules, see \cite{Torrado21}. Of course, this approach requires knowing the reliability function of each module and the dependence structure among the modules. Furthermore, we would have two different cases. On the one hand, the modules that form the system are identical. For example, a plane might have four engines formed with several components each, and we could study what happens when the engine failure occurs.  On the other hand, if a system is formed by heterogeneous modules (with different reliability functions), then we would only obtain a lower bound of system failure time. Studying new strategies to deal with complex systems is a challenging task for future research projects.  
\end{remark}}

\section{Applications}\label{Sec4}

Let us apply the theoretical results obtained in the preceding section to particular system structures under different assumptions. In the first example, we show how to proceed in a system with IID components under the assumptions of case I.

\begin{example}\label{ex1}
We consider the system with lifetime $$T=\max(X_1,\min(X_2,X_3)).$$ 
Its minimal path sets are $P_1=\{1\}$ and $P_2=\{2,3\}$. Thence, from \eqref{MPS}, the  reliability function of $T$ is
\begin{equation}\label{eq1}
\bar F_T(t)=\Pr(X_1>t)+\Pr(X_{\{2,3\}}>t)-\Pr(X_{\{1,2,3\}}>t)
\end{equation}
for $t\geq 0$ ($1$ elsewhere). If the components are IID and $\bar F$ is their common reliability function, then 
$$\bar F_T(t)=\bar F(t)+\bar F^2(t)-\bar F^3(t)=\bar q(\bar F(t))$$
for $t\geq 0$,  where $\bar q(u)=u+u^2-u^3$ for $u\in[0,1]$. Then the system expected lifetime is 
$$E(T)=\int_0^\infty \bar q(\bar F(t)) dy.$$
For example, if $\bar F(t)=\exp(-t/\mu)$ for $t\geq 0$ (exponential distribution with mean $\mu$), then $E(T)= 7\mu/6=1.166667\mu$. This is the prediction (expected value) at time $t=0$; the system is slightly better than a system with a single component.

Now let us predict the residual lifetime of the system at the first component failure, that is, let us consider $T_1=X_{1:3}=t$ for $t\geq 0$. By using the procedure showed in Theorem \ref{th1}, the joint reliability function of $(T_1,T)$ is
$$\bar G(x,y)=\Pr(T_1>x,T>y)=\Pr(T_1>x)=\bar F^3(x)$$
for $0\leq y<x$ and
\begin{align*}
\bar G&(x,y)=\\
&=\Pr(X_{1:3}>x,X_1>y)+\Pr(X_{1:3}>x, X_{\{2,3\}}>y)-\Pr(X_{1:3}>x,X_{1:3}>y)\\
&=\Pr(X_1>y,X_2>x,X_3>x)+\Pr(X_{1}>x, X_{2}>y,X_{3}>y)-\Pr(X_{1:3}>y)\\
&=\bar F^2(x)\bar F(y)+\bar F(x)\bar F^2(y)-\bar F^3(y)
\end{align*}
for $0\leq x\leq y$. Hence $\bar G(x,y)=\widehat D(\bar F(x),\bar F(y))$ for all $x,y$, where
$$\widehat D(u,v)=\left\{\begin{array}{crr}
u^3 & \text{for}&0\leq u < v\leq 1;\\
u^2v+uv^2-v^3 & \text{for}&0\leq v \leq  u\leq 1;\\
\end{array}%
\right.$$
and
$$\partial_1 \widehat D(u,v)=\left\{\begin{array}{crr}
3u^2 & \text{for}&0\leq u < v\leq 1;\\
2uv+v^2& \text{for}&0\leq v \leq  u\leq 1.\\
\end{array}%
\right.$$ 
Note that $\lim_{v\to 0^+}\partial_1 \widehat D(u,v)=0$. Then, from \eqref{G1c}, we get
$$\bar G_{T|T_1}(y|t)=\Pr(T>y|T_1=t)=\frac {\partial_1 \widehat D (\bar F(t),\bar F(y))}{\partial_1 \widehat D (\bar F(t),1) }=\frac {2\bar F(y)\bar F(t)+\bar F^2(y)}{3\bar F^2(t) }$$
for $0\leq t\leq y$ ($1$ for $y\leq t$). Note that it is a mixture of the residual lifetime of a single component with weight $2/3$ ($X_2$ or $X_3$ are the first failure) and the residual lifetime of a series system with two IID components with weight $1/3$ ($X_1$ is the first failure).

The associated inverse function for $0<w<1$ is then obtained by solving the quadratic equation 
$$\bar F^2(y)+2\bar F(t)\bar F(y)-3w \bar F^2(t)=0,$$
obtaining
$$\bar G^{-1}_{T|T_1}(w|t)=\bar F^{-1}\left(-\bar F(t)+\bar F(t)\sqrt{1+3w}\right)$$
that is the unique positive solution for $0<w<1$ and $t\geq 0$. Therefore, we can predict $T$ by using the median regression curve
$$m(t)=\bar G^{-1}_{T|T_1}(0.5|t)=\bar F^{-1}\left(\bar F(t)\left(\sqrt{2.5}-1\right)\right)\approx \bar F^{-1}\left(0.5811388\bar F(t)\right)$$
for $t\geq 0$. The centered $90\%$ prediction band for $T$ is 
\begin{equation}
\label{90bandEx41}
I_{90}(t)=\left[\bar G^{-1}_{T|T_1}(0.95|t),\bar G^{-1}_{T|T_1}(0.05|t)\right],   
\end{equation}
that is,
$$I_{90}(t)=\left[\bar F^{-1}\left(\bar F(t)\left(\sqrt{3.85}-1\right)\right),\bar F^{-1}\left(\bar F(t)\left(\sqrt{1.15}-1\right)\right) \right].$$
Analogously, the centered $50\%$ prediction band for $T$ is obtained with 
\begin{equation}
\label{50bandEx41}
I_{50}(t)=\left[\bar G^{-1}_{T|T_1}(0.75|t),\bar G^{-1}_{T|T_1}(0.25|t)  \right].
\end{equation}
If the \textcolor{blue}{components}  have an  exponential distribution with mean $\mu$, then  
\begin{equation}
\label{medianEx41}
m(t)=t-\mu \log \left(\sqrt{2.5}-1\right)=t+0.5427656\mu,    
\end{equation}
and the mean regression curve is 
\begin{equation*}
\tilde m(t)=E(T|T_1=t)=\int_0^\infty  \bar G_{T|T_1}(y|t)dy=t+\frac 5 6 \mu=t+0.8333333 \mu
\end{equation*}
for $t\geq 0$. The quantile regression curves are also straight lines. As expected from the independence assumption and the lack of memory property of the exponential distribution, the predictions for the residual lifetime $(T-t|T_1=t)$ do not depend on $t$.

In Figure \ref{fig1}, left,  we provide the plots of the median \blue{(red-dot-dashed line)} and mean \blue{(black-dot-dashed line)} regression curves and the prediction bands for a standard exponential distribution jointly with a scatterplot of a simulated sample from $(T_1,T)$ of size $100$. \blue{In Figure \ref{fig1}, right,} we estimate these curves (lines) by using linear quantile regression (LQR) (for $m$ and the prediction band limits) and linear regression (for $\tilde m$). The basic theory for LQR can be seen in Koenker \cite{K05}. 

Note that the prediction bands explain better the uncertainty in these predictions than the single mean or median regression curves. For example, the first data in our sample is $T_1=0.4632196$ and $T=0.8434573$. The predictions for $T$ at this failure time for $T_1$ are $m(T_1)=1.105407$ and $\tilde m(T_1)=1.296553$, which are quite far from the exact value. However, the centered prediction intervals for this value are
$I_{50}(T_1)=[0.8554071,1.355407]$  and $I_{90}(T_1)=[0.6554071,1.555407]$. The first one does not contain the exact value (it is close to the left margin) but the second does. 

This sort of information is important to decide if we should perform repairs in the system at time $T_1$, which is when the first component failure occurs. 
Even more, note that we can choose the desired prediction bands. For example, one could prefer to choose the bottom $90\%$ prediction band for $T$ $$I^{bottom}_{90}(t)=\left[t,\bar G^{-1}_{T|T_1}(0.10|t)  \right]=\left[t,\bar F^{-1}\left(\sqrt{1.3}\bar F(t)-\bar F(t)\right) \right].$$ 
Also, note that the estimations obtained by using LQR are good, except the one for the $0.95$ regression curve, which has a strong dependency on extreme data. However, we must note that,  in practice,  we will not know if the regression curves are straight lines since we will not know the underlying model for the components.

%%%%%%%%%%%%%%%%%%%%%%%%%%%%%%%%%%%

\begin{table}
{\rowcolors{2}{gray!20!white!50}{gray!70!white!30}
\begin{center}
\begin{tabular}{ |p{0.85cm}|p{1.65cm}|p{2.5cm}|p{2.5cm}|}
\hline
\hspace{0.25cm}$k$ & Replications & Coverage probabilities of $\hat{I}_{50}(t)$ & Coverage probabilities of $\hat{I}_{90}(t)$\\
\hline
\hspace{0.15cm} 1     &\hspace{0.25cm} 1000 &\hspace{0.6cm} 0,36327 &\hspace{0.6cm} 0,71278 \\
\hspace{0.15cm} 5     &\hspace{0.25cm} 1000 & \hspace{0.6cm} 0,46193 &\hspace{0.6cm} 0,85889\\
\hspace{0.15cm}10     &\hspace{0.25cm} 1000 & \hspace{0.6cm} 0,48125 &\hspace{0.6cm} 0,87922\\
\hspace{0.15cm}25     &\hspace{0.25cm} 1000 & \hspace{0.6cm} 0,49396 &\hspace{0.6cm} 0,89131 \\    
\hspace{0.15cm}50     &\hspace{0.25cm} 1000 & \hspace{0.6cm} 0,49748 &\hspace{0.6cm} 0,89591 \\    
 100    &\hspace{0.25cm} 1000 & \hspace{0.6cm} 0,49877 &\hspace{0.6cm} 0,89739 \\    

\hline
\end{tabular}
\end{center}}
\caption{\textcolor{blue}{Coverage probabilities of the estimated centered $50\%$ prediction band $(\hat{I}_{50}(t))$ and centered $90\%$ prediction band $(\hat{I}_{90}(t))$ depending on the sample size $k$ used to calculate $\hat{\mu}$.}}
\label{Tabla01}  
\end{table}
%%%%%%%%%%%%%%%%%%%%%%%%%%%%%%%%%%%

\blue{Sometimes the parameter in the model is unknown. This is the case when the reliability of the components measured in lab could have a different performance when they are installed in the system. From a realistic point of view, we think that we should estimate it from the values of $T_1$. For example, we can assume IID components with exponential distributions and an unknown common mean $\mu$.  Then, to estimate $\mu$, we can consider two situations.  In the first one,  we have several systems with the same structure and a common mean $\mu$. Then, as $E(T_1)=\mu/3$, we can estimate $\mu$ with $\widehat \mu=3\bar T_1$, where $\bar T_1$ is the sample mean for the values of $T_1$. The median regression curve and the prediction bands are then obtained by replacing in \eqref{medianEx41} and \eqref{90bandEx41} or \eqref{50bandEx41}, the unknown value $\mu$ with $\widehat \mu$.  We have performed a simulation study to determine the coverage probabilities of the prediction intervals obtained by using this procedure. 
%\st{To do so, we consider a sample of $n=100$ systems and we study if $T$ belongs to the $90\%$ and $50\%$ centered predictions intervals obtained with $T_1$ and $\widehat \mu$. We repeat this experiment $1000$ times and the coverage probabilities of the estimated prediction intervals are $89.679\%$ and $49.794\%$, which are very close to the expected coverage values. However, in other situations, we might just have a single value of $T_1$ to estimate $\mu$. To do so we can use $\widehat \mu=3 T_1$. Then, as in the first situation, we get the estimated prediction interval by using  $\widehat \mu=3 T_1$ instead of $\mu$ and we check if $T$ belongs to that interval. We do so with $100$ systems repeating this experiment $1000$ times and the coverage probabilities of the estimated prediction intervals are now $71.278\%$ and $36.327\%$. As expected, these values are below the expected ones because the estimation of $\mu$ is bad (it is based on just one data). So we recommend getting samples of systems to use this procedure (if possible).} 
We consider a sample of $k$ systems and we estimate $\mu$ from $k$ values of $T_1$. Then, we calculate the percentage of values of $T$ that belong to the $50\%$ and $90\%$ centered prediction intervals, obtained from $T_1$ and using the estimate of $\mu$, $\widehat \mu$. We repeat this experiment $1000$ times and calculate the average of the corresponding percentages. The results obtained varying $k$ (the number of systems considered to calculate the estimated value $\hat{\mu}$) can be seen in Table \ref{Tabla01}. For $k = 1$, the coverage probabilities of the estimated prediction intervals are $36.327\%$ and $71.278\%$. Clearly, these values are below the expected ones because the estimation of $\mu$ is poor (it is based on only one data). However, the results improve just by taking $k=5$, which is a really small sample size to estimate $\mu$.}

\blue{Finally, note that if we also have information about the broken component at time $t$, then we can get better predictions (see \cite{NAS19}). For example, if we know that $T_1=X_{1:3}=X_2=t$, then 
$$\Pr(T>y|T_1=X_2=t)=\Pr(X_1>y|X_1>t)=\frac{\bar F(y)}{\bar F(t)}$$
for $y\geq t$. The expression for $T_1=X_{1:3}=X_3=t$ is the same. However, for $T_1=X_{1:3}=X_1=t$ we have
$$\Pr(T>y|T_1=X_1=t)=\Pr(\min(X_2,X_3)>y|X_2>t,X_3>t)=\frac{\bar F^2(y)}{\bar F^2(t)}$$
for $y\geq t$. The median regression curves and the prediction bands for these cases can be obtained easily from these two expressions.} 
\end{example}

\begin{figure}%[h!]
	\begin{center}
	\includegraphics*[scale=0.338]{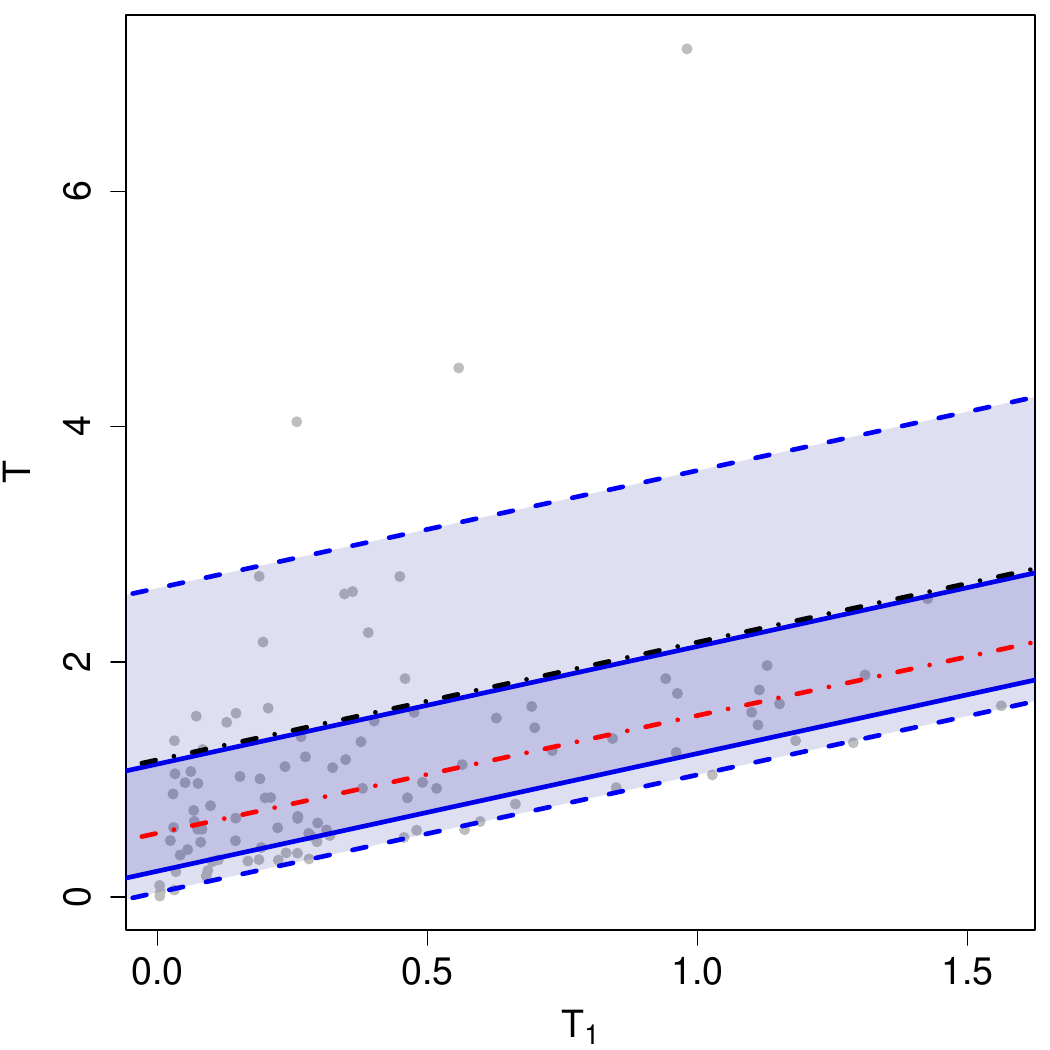}	
    \includegraphics*[scale=0.338]{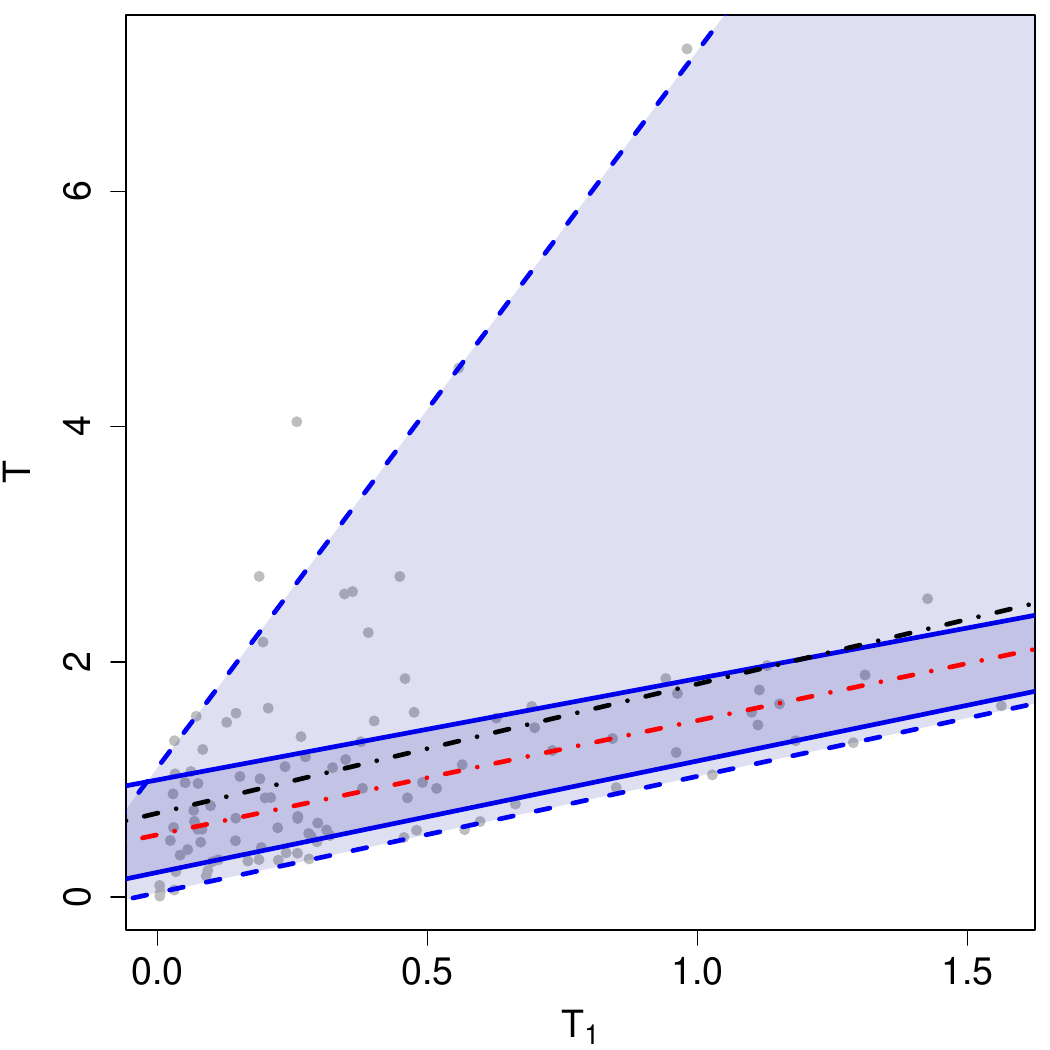}
		\caption{Scatterplots of a sample from $(T_1,T)$ for the systems in Example \ref{ex1} jointly with the theoretical (left) and estimated (right) median \textcolor{blue}{(red-dot-dashed line)} and mean \textcolor{blue}{(black-dot-dashed line)} regression curves  and prediction bands with confidence levels $50\%$ (dark grey) and $90\%$ (light grey).} \label{fig1}
	\end{center}
\end{figure}

%\newpage

In the following example, we introduce a positive dependence between some components to see how it affects the predictions. We use the system structure of Example \ref{ex1} and we assume that the components are ID as well (thus we can compare both cases).

\begin{example}\label{ex2}
We consider again the system $T=\max(X_1,\min(X_2,X_3))$ but now, as the second and third components are in the same path, we assume that they are affected by a common environment (or load), and so they are dependent. To represent this dependency, we use a Clayton survival copula with $\theta=1$  for them (see, e.g., line 1 of Table 4.1 in \cite{Ne06}, p. 116), which leads to a positive dependence. As the first component is in a separate path,  we assume that it is independent of the other components. Hence, if the components are ID, the joint reliability function of the component lifetimes is 
$$\Pr(X_1>x_1,X_2>x_2,X_3>x_3)=  \widehat C(\bar F(x_1), \bar F(x_2),\bar F(x_3))$$
for $x_1,x_2,x_3\geq 0$, where %$\bar F$ is the reliability of the components and 
$$\widehat C(u_1,u_2,u_3)= \frac {u_1u_2u_3}{u_2+u_3-u_2u_3}$$
for $u_1,u_2,u_3\in[0,1]$. Hence, from \eqref{eq1}, the system reliability is 
$$\bar F_T(t)=\bar F(t)+\frac{\bar F(t)}{2-\bar F(t)}-\frac{\bar F^2(t)}{2-\bar F(t)} $$
for $t\geq 0$. 

If the component lifetimes have an exponential distribution with mean $\mu$, numerically we get that the prediction for $T$ at time $t=0$ is $E(T)=1.306853\mu$,  which is slightly larger than the expected result obtained in the preceding example with independent components ($1.166667\mu$). This is an expectable result since the positive dependence between components $2$ and $3$ improves the series system in the minimal path set formed with these components.

To obtain the quantile regression curve, we need to get the joint reliability function of $(T_1,T)$, where $T_1=X_{1:3}=\min(X_1,X_2,X_3)$. 
$$\bar G(x,y)=\Pr(T_1>x,T>y)=\Pr(T_1>x)=
\frac {\bar F^2(x)}{2-\bar F(x)}
$$
for $0\leq y<x$, and
\begin{align*}
\bar G(x,y)
&=\Pr(X_1>y,X_2>x,X_3>x)+\Pr(X_{1}>x, X_{2}>y,X_{3}>y)\\
&\quad -\Pr(X_1>y,X_2>y,X_3>y)\\
&=\frac{\bar F(x)\bar F(y)}{2-\bar F(x)}+\frac{\bar F(x)\bar F(y)}{2-\bar F(y)}-\frac{\bar F^2(y)}{2-\bar F(y)}
\end{align*}
for $0\leq x\leq y$. Hence, $\bar G(x,y)=\widehat D(\bar F(x),\bar F(y))$ for all $x,y$, where
$$\widehat D(u,v)=\left\{\begin{array}{crr}
\frac{u^2}{1-u} & \text{for}&0\leq u < v\leq 1;\\
\frac{uv}{1-u}+\frac{uv-v^2}{1-v}  & \text{for}&0\leq v \leq  u\leq 1;\\
\end{array}%
\right.$$
and
$$\partial_1 \widehat D(u,v)=\left\{\begin{array}{crr}
 \frac{2u-u^2}{(1-u)^2} & \text{for}&0\leq u < v\leq 1;\\
\frac {v}{(1-u)^2}+\frac {v}{1-v} & \text{for}&0\leq v \leq  u\leq 1.\\
\end{array}%
\right.$$ 
Note that $\lim_{v\to 0^+}\partial_1 \widehat D(u,v)=0$. Then, from \eqref{G1c}, we get
$$\bar G_{T|T_1}(y|t)=\frac {\partial_1 \widehat D (\bar F(t),\bar F(y))}{\partial_1 \widehat D (\bar F(t),1) }=\frac {F(y)\bar F(y)+\bar F(y)F^2(t)}{\bar F(t) F(y) (2-\bar F(t)) }$$
for $y>t$ (one  for $0\leq y\leq t$). The associated quantile function for $0<w<1$ is obtained by solving (in $y$) the quadratic equation 
$$\bar F^2(y)-\left[1+F^2(t)+w\bar F(t)(2-\bar F(t))  \right] \bar F(y)+w\bar F(t) (2-\bar F(t)) =0,$$
obtaining
$$\bar G^{-1}_{T|T_1}(w|t)=\bar F^{-1}\left(\frac{b(t,w)-\sqrt{b^2(t,w)-4c(t,w)}}{2}\right),$$
where $b(t,w)= 1+F^2(t)+w\bar F(t)(2-\bar F(t))$ and $c(t,w)=w\bar F(t) (2-\bar F(t))$. It is the unique solution of the quadratic equation for $\bar F(y)\in[0,1]$ when $0<w<1$ and $t\geq 0$. Therefore, we can predict $T$ by using the median quantile regression curve
$$m(t)=\bar G^{-1}_{T|T_1}(0.5|t)=\bar F^{-1}\left(\frac{b(t,0.5)-\sqrt{b^2(t,0.5)-4c(t,0.5)}}{2}\right)$$
for $t\geq 0$. Analogously, the centered $90\%$ and  $50\%$ prediction  bands for $T$ are $I_{90}(t)=\left[\bar G^{-1}_{T|T_1}(0.95|t),\bar G^{-1}_{T|T_1}(0.05|t)  \right]$
and %the centered confidence band for $T$,  
$I_{50}(t)=\left[\bar G^{-1}_{T|T_1}(0.75|t),\bar G^{-1}_{T|T_1}(0.25|t)  \right]$. 
The mean regression curve is 
$$\tilde m(t)=\int_0^\infty  \bar G_{T|T_1}(y|t)dy=t+\int_t^\infty \frac {F(y)\bar F(y)+\bar F(y)F^2(t)}{\bar F(t) F(y) (2-\bar F(t))} dy$$
for $t\geq 0$. Hence, we do not have an explicit expression for it. In the case of ID components with standard exponential distribution, we get 
\begin{align*}
\tilde m(t)&=t+\int_t^\infty \frac {\bar F(y)}{\bar F(t) (2-\bar F(t))} dy
+\int_t^\infty \frac {\bar F(y)F^2(t)}{\bar F(t) F(y) (2-\bar F(t))} dy\\
&=t+\frac {1}{e^{-t} (2-e^{-t})} \int_t^\infty e^{-y}dy
+\frac {(1-e^{-t})^2}{e^{-t} (2-e^{-t})}\int_t^\infty \frac {e^{-y}}{1-e^{-y}} dy\\
&=t+\frac {1}{2-e^{-t}} 
-\frac {(1-e^{-t})^2}{e^{-t} (2-e^{-t})}\log(1-e^{-t})
\end{align*}
for $t\geq 0$. Note that  $\tilde m(0)=0.5< E(T)=1.306853$ as expected. It is also worse than the expected value under independent components ($5/6=0.8333333$) since the positive dependency and an early failure \blue{at time} $t=0$ for $X_2$ (or $X_3$) leads to a close value to zero for $X_3$ (or $X_2$).

In Figure \ref{fig2}, left,  we plot the median \blue{(red-dot-dashed line)} and mean \blue{(black-dot-dashed line)} regression curves and these prediction bands for a standard exponential distribution jointly with a scatterplot of a simulated sample from $(T_1,T)$ of size $100$. To get this sample we use the inverse transform method described for example in \cite{MS02}, p.\ 88. Note that the quantile curves are almost straight lines. Indeed, they are very similar to that obtained in the independent case. We observe that only $6$ data points (failure times) are out of the $90\%$ centered prediction band ($2$ above and $4$ below). Hence, these prediction bands give accurate predictions considering the uncertainty in this procedure.

In Figure \ref{fig2}, right, we estimate these curves by using linear quantile regression  (for $m$ and the limit of the prediction bands) and linear regression (for $\tilde m$). In this case, the worst estimation is obtained for the upper bound of the $50\%$ centered prediction band (top continuous blue line).

As in the preceding example,  the prediction bands explain better the uncertainty in these predictions than the single mean or median regression curves. For example, the first data in our sample is
$T_1=0.04599828$ and $T=0.3444294$. The predictions for $T$ at this failure time for $T_1$ with the exact median and mean regression curves are   $m(T_1)=0.6991802$ and $\tilde m(T_1)=1.009258$, which are not close to the real value. 
The exact centered prediction  intervals for $T$ are
$I_{50}(T_1)=[0.297528, 1.391004]$  and $I_{90}(T_1)=[0.07943828,2.99988]$. Both intervals contain the exact value. In general, from Figure \ref{fig2}, left,  we know that the $95\%$ of the systems will fail between $0$ and $3$ units of ``times'' (years, moths, cycles, etc.) after the first component failure. This information is important to decide if we should perform repairs or replacements in the system at time $T_1$. In other situations (systems, copulas and distributions), these predictions may depend more on $t$.
\end{example}

\begin{figure}%[h!]
	\begin{center}
		\includegraphics*[scale=0.338]{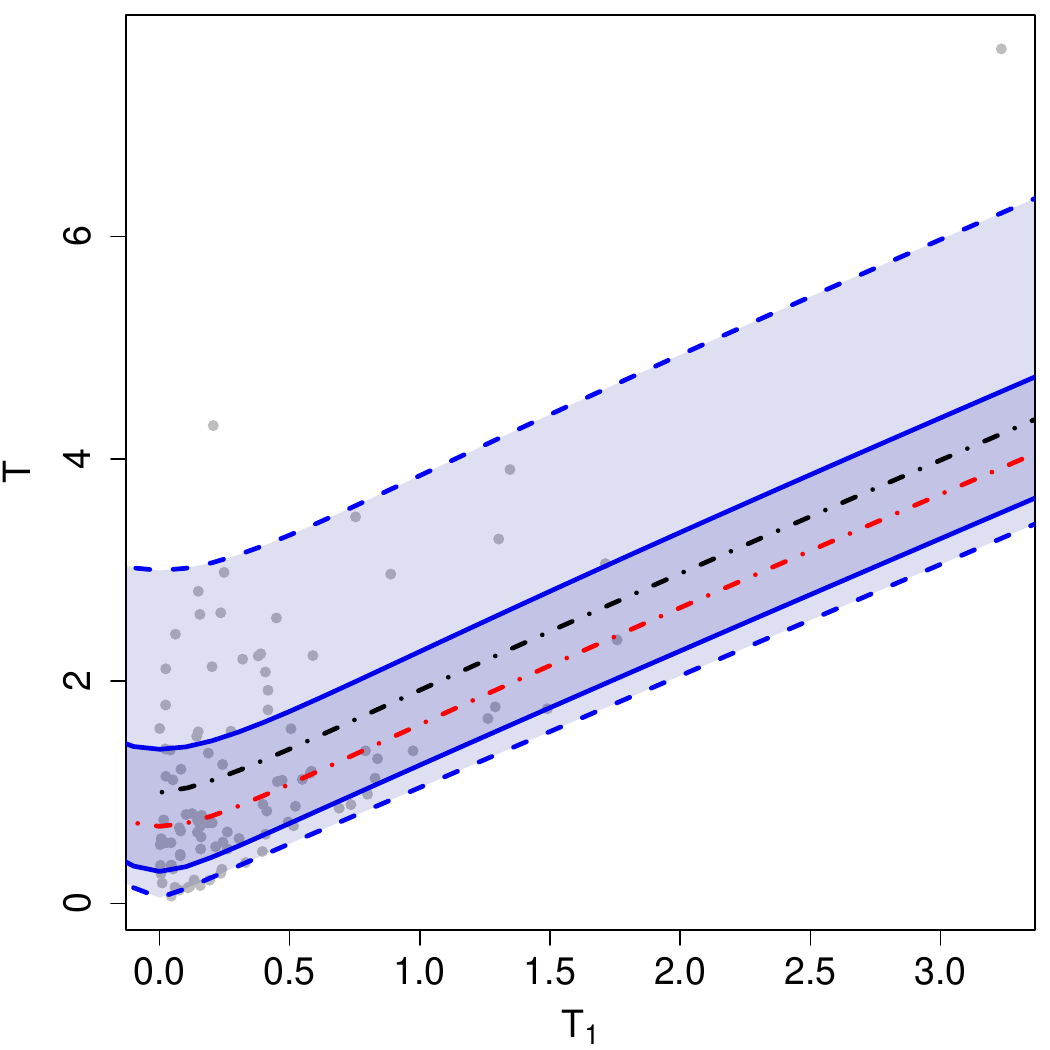}	\includegraphics*[scale=0.338]{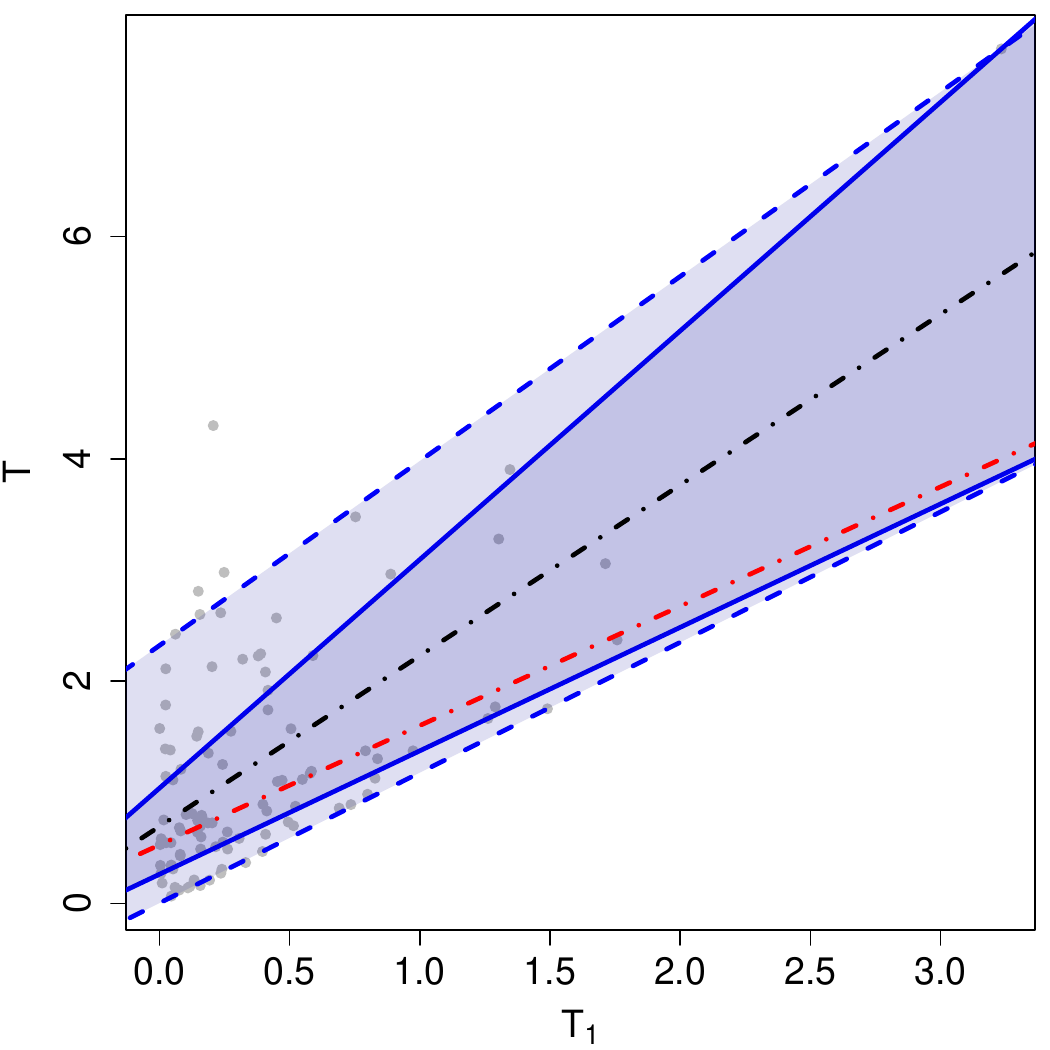}
		\caption{Scatterplots of a sample from $(T_1,T)$ for the systems in Example \ref{ex2} jointly with the theoretical (left) and estimated (right) median \textcolor{blue}{(red-dot-dashed line)} and mean \textcolor{blue}{(black-dot-dashed line)} regression curves and  centered prediction bands with confidence levels $50\%$ (dark grey) and $90\%$ (light grey).} \label{fig2}
	\end{center}
\end{figure}

In the following example, we show how to manage the different options in case II for IID components.

\begin{example}\label{ex3}
	Let us study the system   %lifetime
	$T=\min(X_1,\max(X_2,X_3))$. The minimal path sets are $P_1=\{1,2\}$ and $P_2=\{1,3\}$. From \eqref{MPS}, we get%the reliability function of $T$ is
	\begin{equation}\label{eq1II}
	\bar F_T(t)=\Pr(X_{\{1,2\}}>t)+\Pr(X_{\{2,3\}}>t)-\Pr(X_{\{1,2,3\}}>t)
	\end{equation}
	for $t\geq 0$ ($1$ elsewhere). Now we assume that the  component lifetimes are IID and that $\bar F$ is their common reliability function. Then $\bar F_T(t)=\bar q(\bar F(t))$, where $\bar q(u)=2u^2-u^3$ for $u\in[0,1]$ and the expected lifetime of $T$ at time $t=0$ is 
	$$E(T)=\int_0^\infty \bar q(\bar F(t)) dt=2\int_0^\infty \bar F^2(t) dt-\int_0^\infty \bar F^3(t) dt.$$
	For example, if $\bar F(t)=\exp(-t/\mu)$ for $t\geq 0$ (exponential distribution with mean $\mu$), then $E(T)= 2\mu/3=0.666667\mu$. As in the preceding examples we choose $T_1=X_{1:3}$, that is, it is the first component failure. However, now $$\Pr(T=T_1)=\Pr(T_1=X_1)=1/3$$ and $(T_1,T)$ have a singular part at $T=T_1$ with probability $1/3$ (even when the component lifetimes are IID and absolutely continuous). Therefore, we are in case II.
	
	The joint reliability function of $(T_1,T)$ is
	$$\bar G(x,y)=\Pr(T_1>x,T>y)=\Pr(T_1>x)=\bar F^3(x)$$
	for $0\leq y\leq x$, and
	\begin{align*}
	\bar G&(x,y)
	=\Pr(X_{1:3}>x,X_{\{1,2\}}>y)+\Pr(X_{1:3}>x, X_{\{1,3\}}>y)-\Pr(X_{1:3}>y)\\
	&=\Pr(X_1>y,X_2>y,X_3>x)+\Pr(X_{1}>y, X_{2}>x,X_{3}>y)-\Pr(X_{1:3}>y)\\
	&=2\bar F(x)\bar F^2(y)-\bar F^3(y)
	\end{align*}
	for $0\leq x< y$. Note that $\bar G$ is continuous but not absolutely continuous. Moreover, it can be represented as  $\bar G(x,y)=\widehat D(\bar F(x),\bar F(y))$ for all $x,y$, where
	$$\widehat D(u,v)=\left\{\begin{array}{crr}
	u^3 & \text{for}&0\leq u \leq v\leq 1;\\
	2uv^2-v^3 & \text{for}&0\leq v <  u\leq 1;\\
	\end{array}%
	\right.$$
	and
	$$\partial_1 \widehat D(u,v)=\left\{\begin{array}{crr}
	3u^2 & \text{for}&0\leq u < v\leq 1;\\
	2v^2& \text{for}&0\leq v <  u\leq 1.\\
	\end{array}%
	\right.$$ 
 	
	To solve case $II.b$, we use \eqref{G1b} obtaining
$$
\bar G_{T|T_1}(y|t)=\Pr(T>y|T_1=t)=\frac {2\bar F^2(y)}{3\bar F^2(t)}
$$
for $y>t$ (one for $0\leq y\leq t$). Note that 
$$\alpha(t)=\Pr(T>T_1|T_1=t)=\lim_{y\to t^+}\bar G_{T|T_1}(y|t)=\frac 2 3,$$	
and that $1-\alpha(t) =\Pr(T=T_1|T_1=t)=1/3$. In this case, they do not depend on $t$ and so they coincide with $\Pr(T>T_1)$ and 	$\Pr(T=T_1)$, respectively. Then the median regression curve is
$$m(t)=\bar G^{-1}_{T|T_1}(0.5|t)=\bar F^{-1}\left( \sqrt{0.75} \bar F(t) \right)
$$
for $t\geq 0$. In the exponential case, we get
$$m(t)=t-0.5 \mu \ln(0.75)=t+0.143841\mu$$ for $t\geq 0$. However,  the mean regression curve (in the exponential case) is
$$\tilde m(t)=\int_0^\infty \bar G_{T|T_1}(y|t) dy=t+\int_0^\infty \frac {2\bar F^2(y)}{3\bar F^2(t)} dy=t+\frac 1 3 \mu 
$$
for $t\geq 0$. 
The prediction bands can be obtained in a similar way for any $\bar F$. For example, the $90\%$ bottom prediction band is 
$$I^{bottom}_{90}(t)=\left[t,\bar F^{-1}\left( \sqrt{0.15} \bar F(t) \right)\right]$$
for $t\geq 0$.  In the exponential case, it is
$$I^{bottom}_{90}(t)=\left[t,t-0.5 \mu \ln(0.15)\right]=\left[t,t+0.94856\mu \right]$$
for $t\geq 0$. Of course, the $50\%$ bottom prediction band is $I^{bottom}_{50}(t)=[t,m(t)]$.

In Figure \ref{fig3}, left,  we provide a scatterplot of a simulated sample of size $100$ from  $(T_1,T)$ jointly with the median \blue{(red-dot-dashed line)} and mean \blue{(black-dot-dashed line)} regression curves and the bottom $50\%$ (dark grey) and $90\%$ (light grey) prediction bands for this case. 
They contain $60$ and $87$ data from our simulated sample, respectively, including the $39$ data where $T_1=T$. Note again that the prediction bands give a better representation of the uncertainty in the system lifetime values  than the curves  $m$ and  $\tilde m$. As mentioned above, they can be estimated from the data by using linear quantile regression techniques (see \cite{K05}).

Let us study  now the case II.a, that is, let us assume that the first component failure happens at  a time $t$ ($T_1=t$) and that we know that the system is still alive ($T>t$). Then we want to predict $T$ under these assumptions. To this end, from \eqref{G1bII},  we need to solve  
\begin{equation}\label{ex3G}
\Pr(T>y|T_1=t,T>t)=\frac {\partial_1 \widehat D (\bar F(t),\bar F(y))}{\alpha(t)\partial_1 \widehat D (\bar F(t),1) }=\frac {\bar F^2(y)}{\bar F^2(t)}=w
\end{equation}
for $y>t$ and $0<w<1$. Then the median regression curve for this case is 
$$m(t)= \bar F^{-1}\left( \sqrt{0.5} \bar F(t) \right)
$$
for $t\geq 0$. In the exponential case, we get
$$m(t)=t-0.5 \mu \ln(0.5)=t+0.3465736\mu$$ for $t\geq 0$. However,  the mean regression curve in the exponential case is
$$\tilde m(t)=t+\int_0^\infty \frac {\bar F^2(y)}{\bar F^2(t)} dy=t+0.5 \mu   
$$
for $t\geq 0$. The bottom prediction bands are obtained similarly. We provide the plots in Figure \ref{fig3}, right. We use the same simulated sample but now the $39$ data satisfying $T=T_1$ should be avoided. In these cases, at time $t$, we do not need to predict $T$ since $T=T_1=t$. For the $61$ remaining points, we get $54$ in the bottom $90\%$ prediction band (i.e. a $54/61=88.52459\%$ of the data with $T>T_1$). Only $7$ data are not contained in this band. For the other band, we get $32$ out of $61$ data (i.e. a $32/61=52.45902\%$).

\blue{Finally, note that if we also have information about the broken component at time $t$, then these predictions will not change (due to the symmetry in the system structure and the IID assumption). Thus, if we know that $T_1=X_2$, then the resulting system is $T=\min(X_1,X_3)$, and therefore we get \eqref{ex3G} as well. The predictions for the other case  $T_1=X_3$ are the same. Note that $T_1$ cannot be equal to $X_1$ if we assume $T>T_1$. This is not always the case for other systems where the knowledge of the broken components can be used to get better preditions for the system failure. As mentioned before, these cases can be solved by using the techniques developed in \cite{NAS19,ND17} and the approach based on quantile regression presented here.}

\end{example}

\begin{figure}%[h!]
	\begin{center}
		\includegraphics*[scale=0.338]{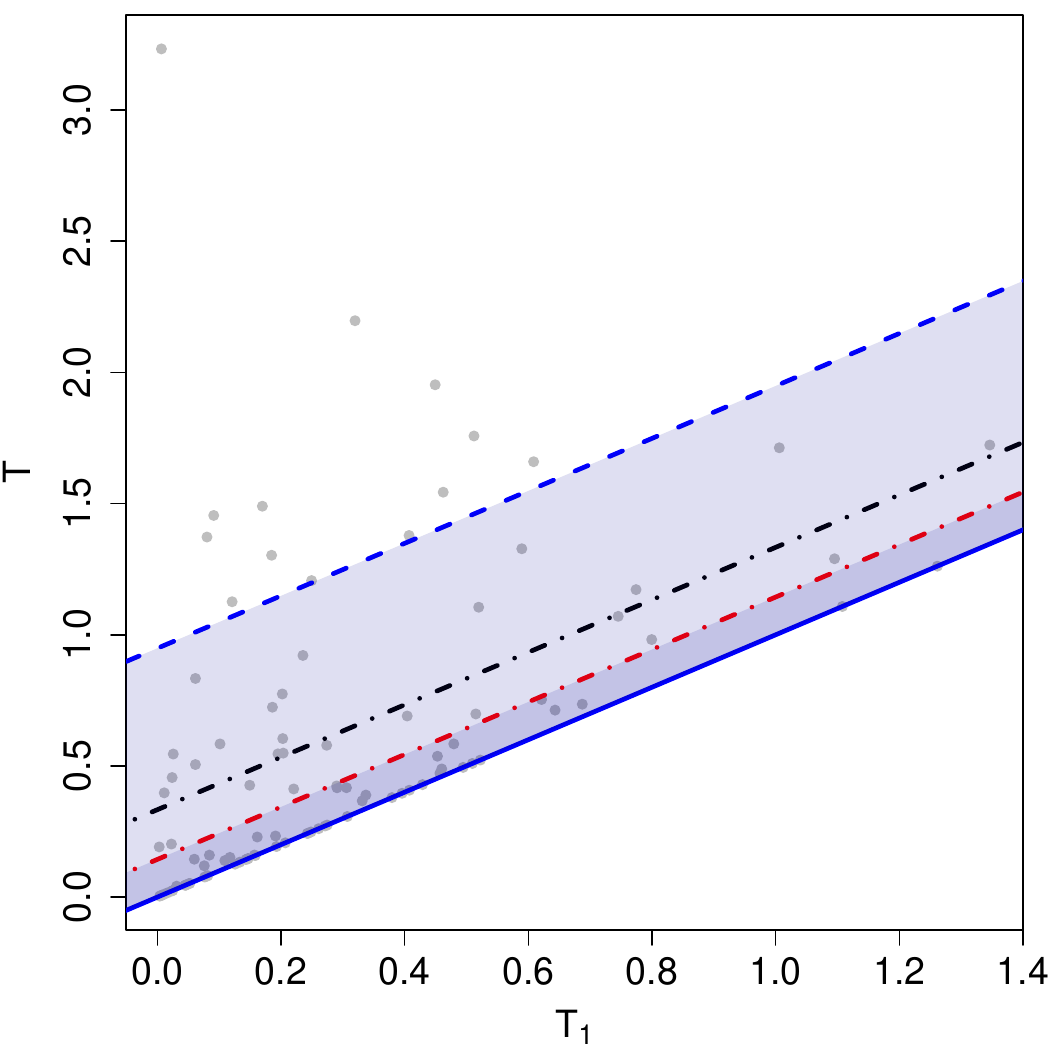}
        \includegraphics*[scale=0.338]{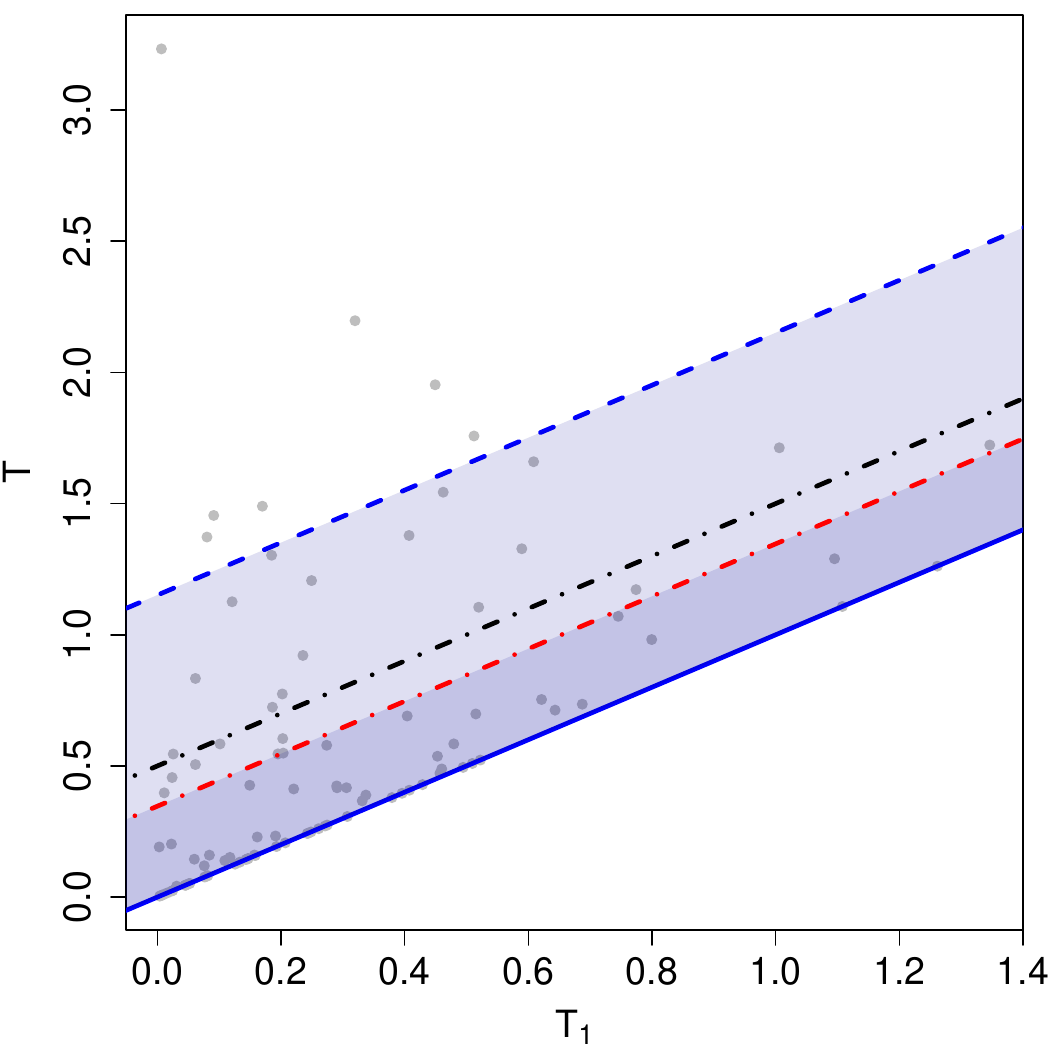}
		\caption{Scatterplots of a sample from $(T_1,T)$ for the systems in Example \ref{ex3} jointly with the plots of \textcolor{blue}{the} theoretical median \textcolor{blue}{(red-dot-dashed line)} and mean \textcolor{blue}{(black-dot-dashed line)} regression curves and the bottom prediction bands with confidence levels  $50\%$ (dark grey) and $90\%$ (light grey) for cases II.b (left) and II.a (right).} \label{fig3}
	\end{center}
\end{figure}

In the next example, we study case II \textcolor{blue}{in a system with} three dependent components. It also shows how to proceed when the explicit expression for the inverse function is not available.

\begin{example}\label{ex4}
	Let us consider the  same system  of Example \ref{ex3}, but now the component lifetimes  $(X_1,X_2,X_3)$ are ID and have the following Farlie-Gumbel-Morgenstern (FGM) survival copula
\begin{equation}\label{FGM}
	\widehat C(u_1,u_2,u_3)= u_1u_2u_3+\theta u_1u_2u_3(1-u_1)(1-u_2)(1-u_3)
\end{equation}
for $u_1,u_2,u_3\in[0,1]$ and $\theta\in [-1,1]$. When $\theta>0$ ($<$) the components have a positive (negative) correlation.
When $\theta=0$, the components are independent.
As in Example \ref{ex3}, we choose $T_1=X_{1:3}$ with $T\geq T_1$ (case II).  
Then, from \eqref{eq1II},  the respective reliability functions of $T$  and $T_1$ are
	\begin{equation*}%\label{eq1II}
	\bar F_T(t)=2\bar F^2(t)-\bar F^3(t)-\theta \bar F^3(t) F^3(t)
	\end{equation*}
and
\begin{equation*}%\label{eq1II}
\bar F_{T_1}(t)=\bar F^3(t)+\theta \bar F^3(t) F^3(t)
\end{equation*}
for $t\geq 0$ (one elsewhere). Therefore, $\bar F_T(t)=\bar q(\bar F(t))$, where $\bar q(u)=2u^2-u^3-\theta u^3(1-u)^3$ for $u\in[0,1]$. The expected lifetime of $T$ (at time $t=0$) when $\theta=1$ (positive dependence) and   $\bar F(t)=\exp(-t/\mu)$ for $t\geq 0$ (exponential distribution with mean $\mu$), is  $E(T)= 0.65\mu$. Note that we also have $\Pr(T=T_1)=\Pr(T_1=X_1)=1/3$ since $(X_1,X_2,X_3)$ has an exchangeable (symmetric) joint distribution.  

The joint reliability function of $(T_1,T)$ is
$$\bar G(x,y)=\Pr(T_1>x,T>y)=\Pr(T_1>x)=\bar F^3(x)+\theta \bar F^3(x) F^3(x)$$
for $0\leq y\leq x$ and
\begin{align*}
	\bar G(x,y)
	&=\Pr(X_{1:3}>x,X_{\{1,2\}}>y)+\Pr(X_{1:3}>x, X_{\{1,3\}}>y)-\Pr(X_{1:3}>y)\\
	&=2\Pr(X_1>y,X_2>y,X_3>x)-\Pr(X_{1:3}>y)\\
	&=2\bar F(x)\bar F^2(y)+2\theta \bar F(x)\bar F^2(y) F(x) F^2(y)
	-\bar F^3(y)-\theta \bar F^3(y) F^3(y)
	\end{align*}
	for $0\leq x< y$. Thence $\bar G$ is continuous but not absolutely continuous. Moreover, $\bar G(x,y)=\widehat D(\bar F(x),\bar F(y))$ for all $x,y$, where
	$$\widehat D(u,v)=\left\{\begin{array}{crr}
	u^3+\theta (u-u^2)^3 & \text{for}&0\leq u \leq v\leq 1;\\
	2uv^2+2\theta (u-u^2)(v-v^2)^2-v^3-\theta (v-v^2)^3 & \text{for}&0\leq v <  u\leq 1;\\
	\end{array}%
	\right.$$
	and
	$$\partial_1 \widehat D(u,v)=\left\{\begin{array}{crr}
	3u^2 +3\theta u^2(1-u)^2(1-2u)& \text{for}&0\leq u < v\leq 1;\\
	2v^2+2\theta v^2(1-v)^2 (1-2u)& \text{for}&0\leq v <  u\leq 1.\\
	\end{array}%
	\right.$$ 
	
	To solve case $II.b$, we use \eqref{G1b} obtaining
\begin{equation}\label{ex4eq1}
	\bar G_{T|T_1}(y|t)=\Pr(T>y|T_1=t)=\frac {2}{3}
\cdot \frac{\bar F^2(y)+\theta \bar F^2(y)  F^2(y) (1-2F(t))}
{\bar F^2(t) +\theta \bar F^2(t) F^2(t)(1-2F(t))}
\end{equation}
	for $y>t$ (one for $0\leq y\leq t$). Note that 
	$$\alpha(t)=\Pr(T>T_1|T_1=t)=\lim_{y\to t^+}\bar G_{T|T_1}(y|t)=\frac 2 3$$	
	and that $1-\alpha(t) =\Pr(T=T_1|T_1=t)=1/3$. In this case, they do not depend on $t$ as well (due to the symmetry of the model and the system). 	To get the inverse $\bar G^{-1}_{T|T_1}(w|t)$ of this function we need to solve in $y$ the equation 
	$$\frac{\bar F^2(y)+\theta \bar F^2(y)  F^2(y) (1-2F(t))}
	{\bar F^2(t) +\theta \bar F^2(t) F^2(t)(1-2F(t))}	=\frac 32 w$$
for $0<w<1$. Unfortunately, we do not have an explicit expression for this solution. Instead, we can use \eqref{ex4eq1} to get the plots of the level curves of $\bar G_{T|T_1}(y|t)$. Thus, the median regression curve $m$ is obtained with the level $w=0.5$ and the $90\%$ bottom prediction band with the level $w=0.1$. 
In Figure \ref{fig4}, left,  we plot these level curves  and  the associated prediction bands for $\theta=1$ and $\mu=1$. We also add a scatterplot of a simulated sample of size $100$ from $(T_1,T)$ obtained by using the inverse transform method (see, e.g., \cite{MS02}, p.\ 88).

%\red{Inciso: explico cÃÂ³mo se genera esta muestra. No se si merece la pena dejarlo}

To generate this sample we note that the survival copula of $(X_1,X_2)$ is 
$$\widehat C_{1,2}(u,v)=\widehat C(u,v,1)=uv$$
for $u,v\in[0,1]$.  Therefore, the values for $X_1$ and $X_2$ can be generated as $X_i=\bar F^{-1}(U_i)=-\ln(U_i)$,  where $U_i$ for $t=1,2$  are  
independent uniform random variables in $(0,1)$.
To obtain $X_3$ we note that the PDF of $(X_1,X_2,X_3)$ is 
$$f_{1,2,3}(x_1,x_2,x_3)=f(x_1)f(x_2)f(x_3) \partial_{1,2,3} \widehat C(\bar F_1(x_1),\bar F_2(x_2),\bar F_3(x_3)),$$
where
$$\partial_{1,2,3} \widehat C(u_1,u_2,u_3)=1+(1-2u_1)(1-2u_2)(1-2u_3)$$
for $u_1,u_2,u_3\in[0,1]$. Therefore, a PDF of $(X_3|X_1=x_1,X_2=x_2)$ is
$$f_{3|1,2}(x_3|x_1,x_2)=\frac{f_{1,2,3}(x_1,x_2,x_3)}{f(x_1)f(x_2)}=e^{-x_3}+(1-2e^{-x_1})(1-2e^{-x_2})(e^{-x_3}-2e^{-2x_3}) $$
for $x_3\geq 0$, where $x_1,x_2>0$. Hence,
$$\bar F_{3|1,2}(x_3|x_1,x_2)=e^{-x_3}+(1-2e^{-x_1})(1-2e^{-x_2})(e^{-x_3}-e^{-2x_3})$$
for $x_3\geq 0$. To get the inverse of this function we must solve (in $x_3$) the equation  $\bar G_{3|1,2}(x_3|x_1,x_2)=w$ for $0<w<1$, which is equivalent to
$$a(x_1,x_2) e^{-2x_3}- (1+a(x_1,x_2)) e^{-x_3} +w=0,$$
where $a(x_1,x_2)=(1-2e^{-x_1})(1-2e^{-x_2})$. Thus, we must solve $p(z)=0$ for the polynomial $p(z)=az^2-(a+1)z+w$. As $p(0)=w>0$ and $p(1)=a-(a+1)+w=w-1<0$, this polynomial has a unique solution in $(0,1)$.  Therefore, if $a(x_1,x_2)\neq 0$, then this root is
$$ z=e^{-x_3}=\frac{1+a(x_1,x_2)-\sqrt{(1+a(x_1,x_2))^2-4wa(x_1,x_2)}}{2a(x_1,x_2)},$$
that is,
\begin{equation}\label{INVF3|21}
\bar F^{-1}_{3|1,2}(w|x_1,x_2)=-\ln\left( \frac{1+a(x_1,x_2)-\sqrt{(1+a(x_1,x_2))^2-4wa(x_1,x_2)}}{2a(x_1,x_2)} \right)
\end{equation}
for $0<w<1$, $x_1,x_2>0$ and $a(x_1,x_2)\neq 0$. If $a(x_1,x_2)= 0$, then we obtain $\bar F^{-1}_{3|1,2}(w|x_1,x_2)=-\ln(w)$. Note that the event $a(X_1,X_2)= 0$ has probability zero when $X_1$ and $X_2$ are generated randomly as assumed above. Therefore, $X_3$ can be generated for given values of $X_1$ and $X_2$ from \eqref{INVF3|21}  as 
$X_3:= \bar F^{-1}_{3|1,2}(U_3|X_1,X_2)$ where $U_3$  in an independent random number in $(0,1)$.

%\red{Fin inciso}

\quad

In our simulated sample, we get $31$ points where $T_1=T$ and the sample mean $0.6930643$ for $T$. These values are close to the expected values $33.333$ and $0.65$, respectively.

\quad
%AQUI	

Now we study the case II.a, that is, we assume that the first component failure happens at a time $t$ ($T_1=t$), and at this time we know that the system is still alive  ($T>t$). Then we want to predict $T$ under these assumptions. To this end, from \eqref{G1bII},  we need to solve

	\begin{equation*}\label{ex4G}
	\Pr(T>y|T_1=t,T>t)= \frac{\bar F^2(y)+\theta \bar F^2(y)  F^2(y) (1-2F(t))}
	{\bar F^2(t) +\theta \bar F^2(t) F^2(t)(1-2F(t))}=w
	\end{equation*}
	for $y>t$ and $0<w<1$. 
Again, we do not have an explicit solution. Therefore we can use contour plots. Thus, the median regression curve is obtained with the level curve with $w=1/2$.  The bottom prediction bands are obtained in a similar way. They are plotted in Figure \ref{fig4}, right. We use the same simulated sample but now the $31$ data satisfying $T=T_1$ should be avoided. Only two data points are out of our $90\%$ bottom prediction band.

\end{example}

\begin{figure}%[h!]
	\begin{center}
\includegraphics*[scale=0.338]{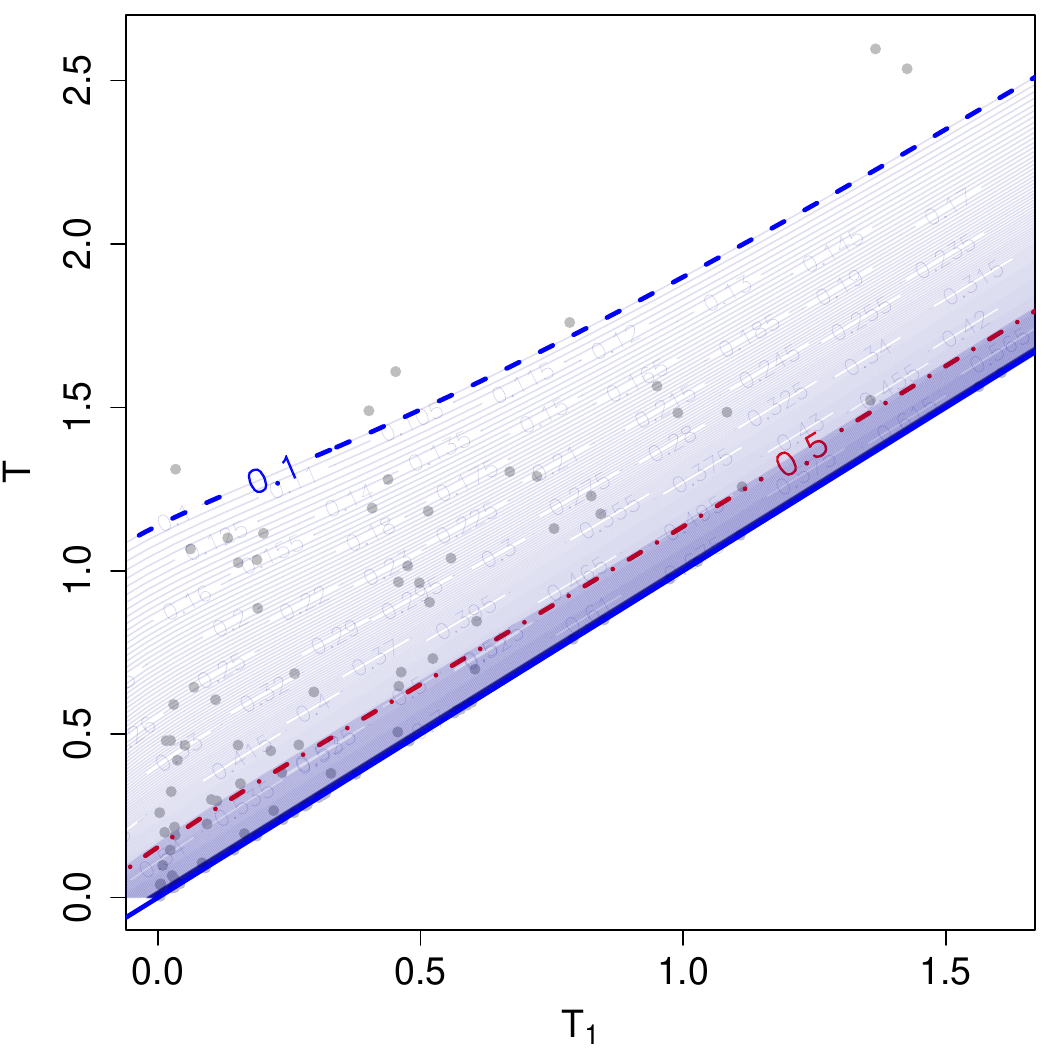}	
\includegraphics*[scale=0.338]{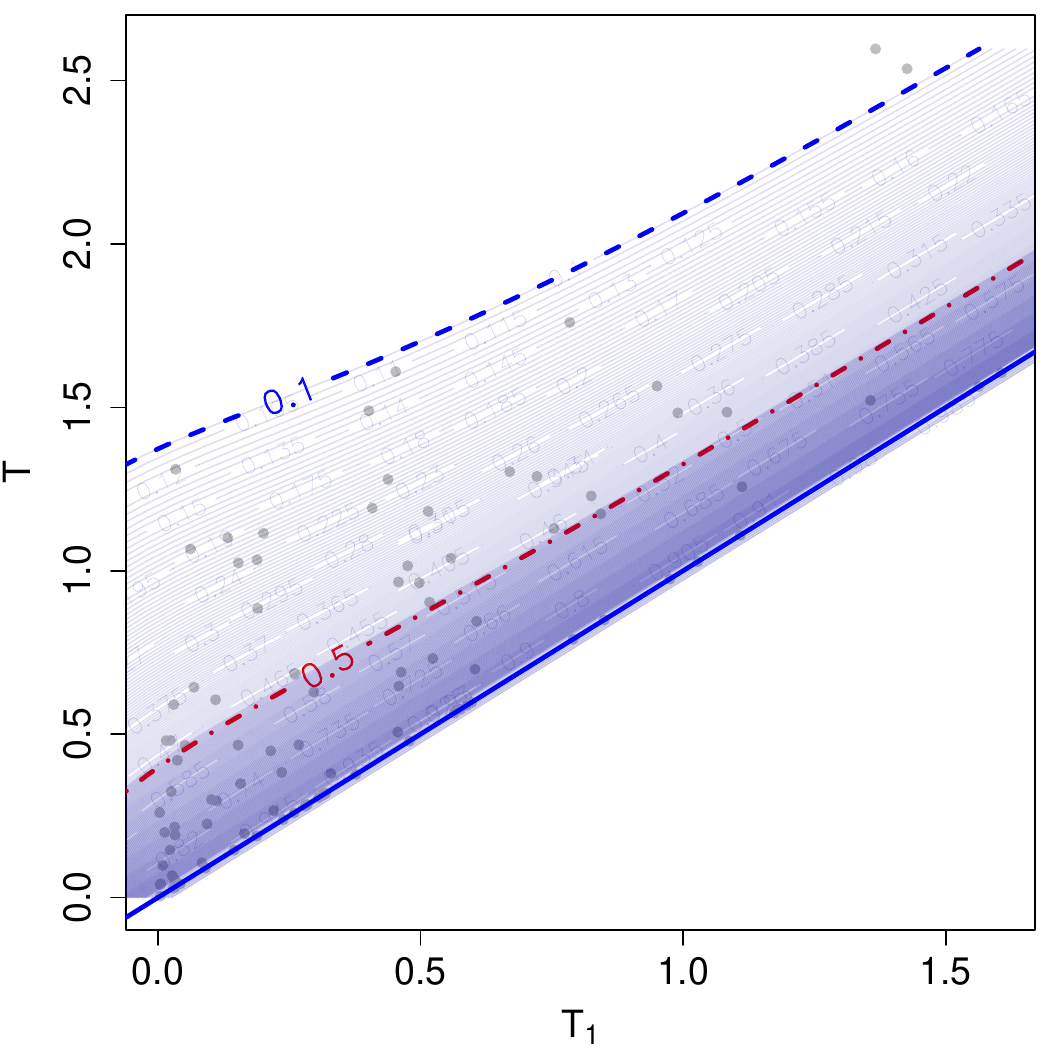}
		\caption{Scatterplots of a simulated sample from $(T_1,T)$ for the systems in Example \ref{ex4} jointly with the theoretical median regression curve \textcolor{blue}{(red-dot-dashed line)} and $50\%$ (dark grey) and $90\%$ (light grey) bottom prediction bands for cases II.b (left) and II.a (right).} \label{fig4}
	\end{center}
\end{figure}

The following examples show how to proceed in case III. The first one is based on the order statistics from IID samples of size $3$. These results are well known (see \cite{DN03,NB22}). Similar results for record values can be seen in \cite{N21b}. 

\begin{example}\label{ex5}%
Let us study the parallel system $T=\max(X_1,X_2,X_3)=X_{3:3}$ with IID components and common reliability function $\bar F$. Then, we know that the first component failure occurs \blue{at time} $t_1>0$ and that the second failure happens \blue{at time} $t_2>t_1$. So we are in case III with $T_1=X_{1:3}=t_1$ and $T_2=X_{2:3}=t_2$. Note that $T>T_2$ (with probability one). 

As $(T_1,T_2,T)$ are the order statistics from three IID random variables, then it is well known (see, e.g., \cite{DN03}, p.\ 12) that their joint PDF is
$$g(t_1,t_2,t)=6 f(t_1)f(t_2)f(t)$$
for $0\leq t_1\leq t_2\leq t$. 
The joint reliability of $(T_1,T_2)$ is 
$$\bar G_{1,2}(t_1,t_2)=\Pr(X_{1:3}>t_1,X_{2:3}>t_2)=3\bar F(t_1)\bar F^2(t_2)-2\bar F^3(t_2)$$ 
for $0\leq t_1\leq t_2$. Therefore, their joint PDF is 
$$g_{1,2}(t_1,t_2)=\partial_{1,2} \bar G_{1,2}(t_1,t_2)=6f(t_1)f(t_2)\bar F(t_2)$$ 
for $0\leq t_1\leq t_2$ (zero elsewhere). Hence, 
the conditional PDF of $(T|T_1=t_1,T_2=t_2)$ for $0\leq t_1\leq t_2$ is
$$g_{3|1,2}(t|t_1,t_2)=\frac{g(t_1,t_2,t)}{g_{1,2}(t_1,t_2)}=\frac{f(t)}{\bar F(t_2)}$$
for $t\geq t_2$ (zero elsewhere). Note that it does not depend on $t_1$. This property is the well-known Markovian property of the order statistics (see, e.g., \cite{DN03}, p.\ 17). Hence, the conditional reliability function is 
$$\bar G_{3|1,2}(t|t_1,t_2)=\Pr(T>t|T_1=t_1,T_2=t_2)=\frac{\bar F(t)}{\bar F(t_2)}$$
for $t\geq t_2$ (one elsewhere) and its quantile function is
$$\bar G^{-1}_{3|1,2}(w|t_1,t_2)=\bar F^{-1}(w\bar F(t_2))$$
for $0<w<1$. As in the preceding examples, this expression can be used to get the median regression curve and the desired prediction bands. The reliability function $\bar G_{3|1,2}$ can also be obtained from distortions by using \eqref{GcaseIII}.

\blue{Similar results hold for general $k$-out-of-$n$ systems (i.e. systems that fail when at least $k$ of its $n$ components fail) with IID components. The results were obtained in \cite{BR22,NB22}. Thus, if we want to predict $X_{s:n}$ from $X_{r:n}$ for $1\leq r< s\leq n$, we can use that 
$$\Pr(X_{s:n}>y|X_{r:n}=t)=\Pr(X_{s-r:n-r}>y|X_{1:n-r}>t)=\bar q_{s-r:n-s}\left(\frac{\bar F(y)}{\bar F(x)}\right)$$
for $y\geq t\geq 0$, where $\bar q_{s-r:n-s}$ is the distortion function of $X_{s-r:n-r}$ under the IID assumption (see, e.g., \cite{N21}, p.\ 30). This function is a polynomial, therefore, it is not possible, in general, to obtain its inverse to get explicit expressions for the median regression curves and the prediction bands. Of course, we could use here numerical solutions instead. Alternatively, we can use Propositions 2.1 and 2.2 in \cite{NB22}. Thus, the median regression curve to predict $X_{s:n}$ from $X_{r:n}=t$ can be obtained as
$$m(t)=\bar F^{-1}(\beta_{0.5} \bar F(t)), $$
where $\beta_{0.5}$ is the median of a beta distribution with parameters $n-s+1$ and $s-r$ (that distribution is included in many statistical programs like R).  For example, if our system is $T=X_{5:10}$ and it is equipped with a warning alarm at the second component failure $T_1=X_{2:10}$, then the median regression curve to predict $X_{5:10}$ from $X_{2:10}=t$ is 
$$m(t)=\bar F^{-1}(0.679481\bar F(t)).$$ 
The prediction bands can be obtained similarly from expression (2.2) in \cite{NB22}.}
\end{example}

In the following example, we study how a weak dependence affects the predictions for a parallel system with three ID components in case III.

\begin{example}\label{ex6}%
We study the same system structures as in Example \ref{ex5}, that is,  $T=X_{3:3}$,  $T_1=X_{1:3}=t_1$ and $T_2=X_{2:3}=t_2$ for $0<t_1<t_2$. However, now we consider that $(X_1,X_2,X_3)$ can be dependent. To simplify the expressions let us assume that they are exchangeable (i.e. their joint distribution is invariant in law under permutations). Then, all components are ID with common reliability function $\bar F$ and their survival copula $\widehat C$ is also exchangeable (symmetric under permutations of their variables). Under these assumptions, a straightforward calculation shows that the joint reliability function $\bar G$ of $(T_1,T_2,T)$ can be written as
$$\bar G(t_1,t_2,t)=6\mathbf{\bar F}(t_1,t_2,t)-3\mathbf{\bar F}(t_2,t_2,t)-3\mathbf{\bar F}(t_1,t,t)+\mathbf{\bar F}(t,t,t)$$  for $0\leq t_1\leq t_2\leq t$, where $\mathbf{\bar F}(x_1,x_2,x_3)=\Pr(X_1>x_1,X_2>x_2,X_3>x_3)$ is the joint reliability function of $(X_1,X_2,X_3)$.	Therefore, $\bar G$ can be represented as $\bar G(t_1,t_2,t)=\widehat D(\bar F(t_1),\bar F(t_2),\bar F(t))$ for $0\leq t_1\leq t_2\leq t$, where
$$\widehat D(u,v,w)= 6\widehat C(u,v,w)-3\widehat C(v,v,w)-3\widehat C(u,w,w)+\widehat C(w,w,w)$$ 
for $0\leq w\leq v\leq u\leq 1$. The expressions for $\widehat D$ in the other cases can be obtained similarly.

Analogously,  the joint reliability function $\bar G_{1,2}$ of $(T_1,T_2)$ is
$$\bar G_{1,2}(t_1,t_2)=3\mathbf{\bar F}(t_1,t_2,t_2)-2\mathbf{\bar F}(t_2,t_2,t_2)$$
for $0\leq t_1\leq t_2$, that is, $\bar G_{1,2}(t_1,t_2)=\widehat D(\bar F(t_1),\bar F(t_2),1)$ with
$$\widehat D(u,v,1)= 3\widehat C(u,v,v)-2\widehat C(v,v,v)$$
for $0\leq v\leq u\leq 1$. 

Therefore, by differentiating these expressions we get
$$\partial_{1,2} \widehat D(u,v,w)=6\partial_{1,2}\widehat C(u,v,w)$$
and
$$\partial_{1,2} \widehat D(u,v,1)=6\partial_{1,2}\widehat C(u,v,v)$$
and, by using \eqref{GcaseIII}, the  reliability function of $(T|T_1=t_1,T_2=t_2)$ is
$$\bar G_{3|1,2}(t|t_1,t_2)=\frac{\partial_{1,2}\widehat C(\bar F(t_1),\bar F(t_2),\bar F(t))}{\partial_{1,2}\widehat C(\bar F(t_1),\bar F(t_2),\bar F(t_2))}$$ 
for $t\geq t_2$ (one for $0\leq t< t_2$).

Note that if $\widehat C$ is the product copula (independent components), this expression leads to the expression obtained in Example \ref{ex5} because  
 $\partial_{1,2}\widehat C(u,v,w)=w$ holds for all $u,v,w\in(0,1)$.

Now, let us consider a different survival copula. For example, let us assume the FGM copula of Example \ref{ex4}, which represents a weak dependence between the component lifetimes. Then 
$$\partial_{1,2}\widehat C(u_1,u_2,u_3)=u_3+\theta u_3(1-u_3)(1-2u_1)(1-2u_2)$$
for all $u_1,u_2,u_3\in[0,1]$, and we get 
\begin{align*}
\bar G_{3|1,2}(t|t_1,t_2)
&=\frac{\bar F(t)+\theta \bar F(t) F(t)(1-2\bar F(t_1))(1-2\bar F(t_2)) }{\bar F(t_2)+\theta \bar F(t_2)F(t_2) (1-2\bar F(t_1))(1-2\bar F(t_2)) }\\
&=\frac{\bar F(t)}{\bar F(t_2)}\cdot \frac{1+\theta  F(t)(1-2\bar F(t_1))(1-2\bar F(t_2)) }{1+\theta F(t_2) (1-2\bar F(t_1))(1-2\bar F(t_2)) }
\end{align*} 
for $t\geq t_2$ (one for $0\leq t< t_2$). Hence, for $\theta=0$, it  coincides with the expression for the IID case (as mentioned above). However, for $\theta\neq 0$, it depends on $t_1$  and so the Markovian property does not hold for this copula.
To get its inverse function, we need to solve the following quadratic equation
$$\theta A(t_1,t_2)  \bar F^2(t)-(1+\theta A(t_1,t_2))\bar F(t)+wB(t_1,t_2)=0,$$
where
$$A(t_1,t_2)=(1-2\bar F(t_1))(1-2\bar F(t_2))\in[-1,1]$$
and
$$B(t_1,t_2)=\bar F(t_2)+\theta\bar F(t_2) F(t_2)(1-2\bar F(t_1))(1-2\bar F(t_2))\in[0,1]$$
for all $t_1,t_2$. A straightforward calculation shows that this equation has a unique solution in $[0,1]$ given by 
$$\bar F(t)= \frac{1+\theta A(t_1,t_2)-\sqrt{(1+\theta A(t_1,t_2))^2-4\theta wA(t_1,t_2)B(t_1,t_2)}}{2\theta A(t_1,t_2)}$$
for $\theta A(t_1,t_2)\neq 0$. From this expression we can compute $\bar G^{-1}_{3|1,2}(w|t_1,t_2)$
for $0<w<1$, $0\leq t_1\leq t_2$ and  $\theta\in[-1,1]$. In particular, the median regression map is obtained with $w=0.5$.

For example, let us consider a standard exponential reliability function and $\theta=1$. Then%,  the reliability function  of $T$ is 
$$\bar F_T(t)= 3e^{-t}-3e^{-2t}+e^{-3t}+ e^{-3t}(1-e^{-t})^3$$
for $t\geq 0$ and its mean $E(T)=1.85$. 

In Figure \ref{fig5}, left,  we plot the level curves (predictions) of the median regression map $m(t_1,t_2)=\bar G^{-1}_{3|1,2}(0.5|t_1,t_2)$  jointly with  the values obtained for $T_1$ and $T_2$ in the simulated sample for $(X_1,X_2,X_3)$ of Example \ref{ex4} with the FGM copula and $\theta=1$.

\begin{figure}%[h!]
	\begin{center}
	\includegraphics*[scale=0.338]{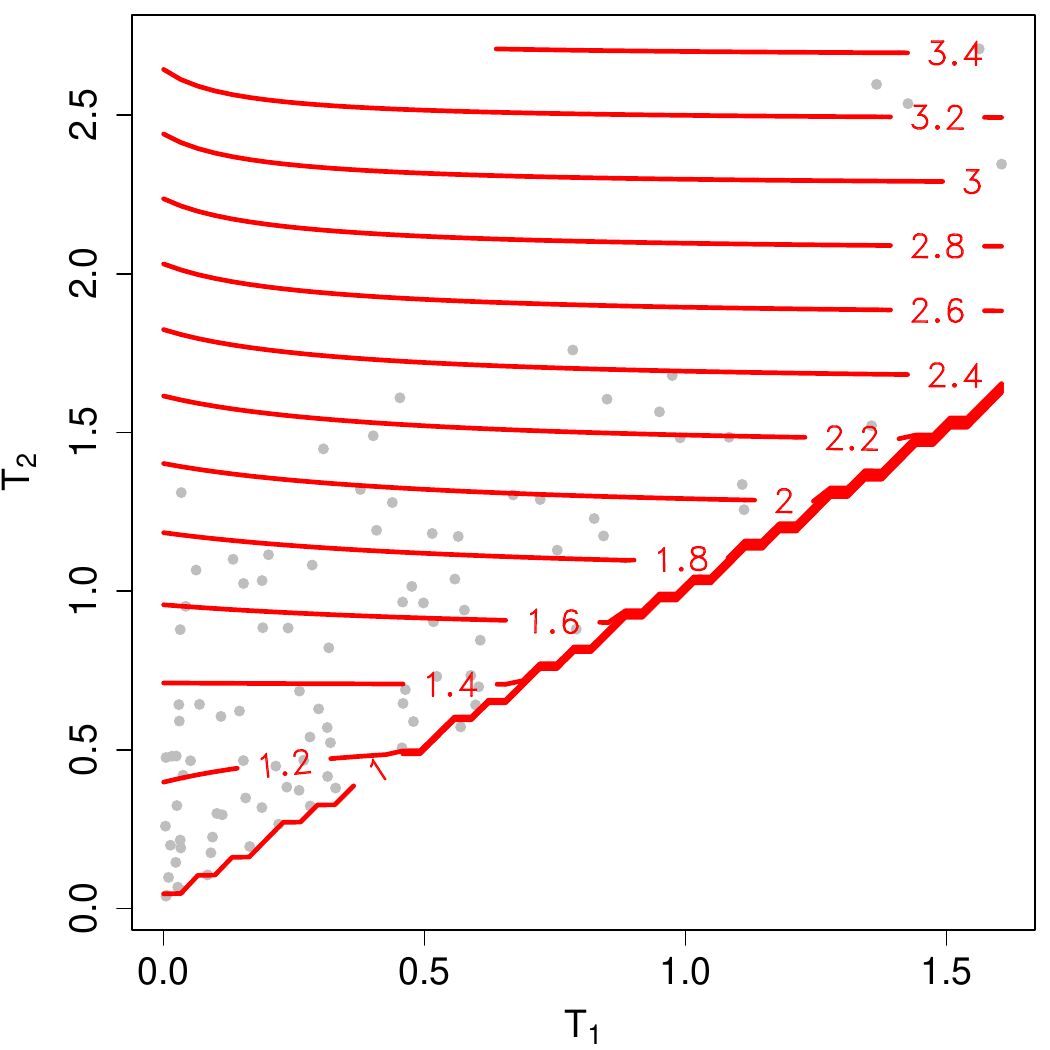}	
    \includegraphics*[scale=0.338]{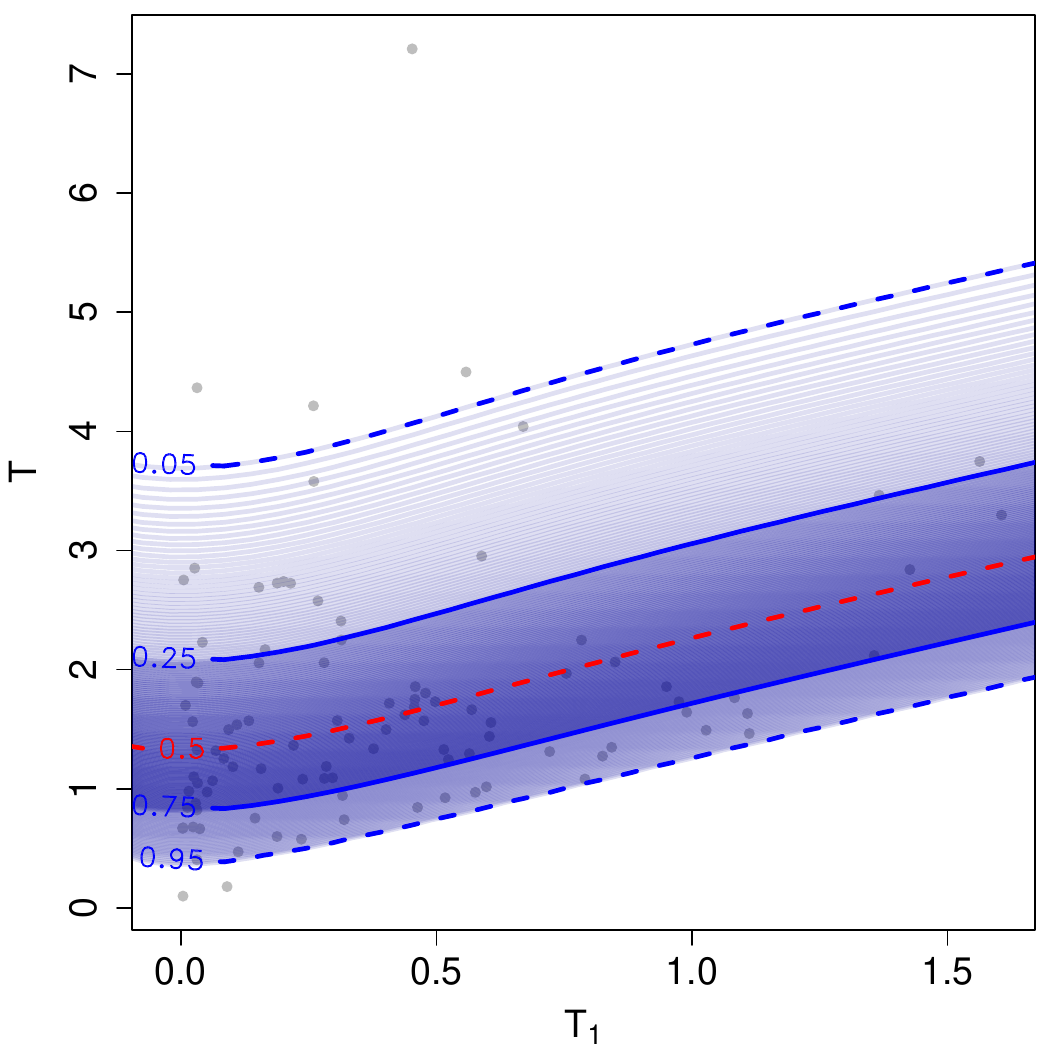}
		\caption{Scatterplots of a sample from $(T_1,T_2)$ (left) and  $(T_1,T)$ (right)  for the systems in Example \ref{ex6} jointly with the theoretical median  regression curve \textcolor{blue}{(red-dashed line)}   and the  centered prediction bands (right plot) with confidence le
  vels $50\%$ (dark grey) and $90\%$ (light grey). In the left plot, we only give the level curves (predictions) of the median regression map $m(t_1,t_2)$. } \label{fig5}
	\end{center}
\end{figure}

These predictions can be compared with the ones obtained from the first component failure, that is, $(T|T_1=t_1)$ (case I) given  Figure \ref{fig5}, right.
To get these predictions we observe that
$$\Pr(T_1>t_1,T>t)=\mathbf{\bar F}(t_1,t_1,t_1)$$
for $0\leq t< t_1$ and
$$\Pr(T_1>t_1,T>t)=3\mathbf{\bar F}(t_1,t_1,t)-3\mathbf{\bar F}(t_1,t,t)+\mathbf{\bar F}(t,t,t)$$
for $0\leq t_1\leq  t$. Therefore,  $\Pr(T_1>t_1,T>t)=\widehat D_{1,3}(\bar F(t_1), \bar F(t))$ where
$$\widehat D_{1,3}(u,w)=\left\{\begin{array}{crr}
\widehat C(u,u,u) & \text{for}&0\leq u \leq w\leq 1;\\
3\widehat C(u,u,w)-3\widehat C(u,w,w)+\widehat C(w,w,w) & \text{for}&0\leq w <  u\leq 1.\\
\end{array}%
\right.$$
For the FGM copula in \eqref{FGM} we get
$$\widehat D_{1,3}(u,w)=\left\{\begin{array}{crr}
u^3+\theta u^3 (1-u)^3 & \text{for}&0\leq u \leq w\leq 1;\\
3u^2w+3\theta u^2w(1-u)^2(1-w) & &\\
-3uw^2-3\theta uw^2(1-u)(1-w)^2& &\\ +w^3+\theta w^3 (1-w)^3 &\text{for}&0\leq w <  u\leq 1;\\
\end{array}%
\right.$$
and
$$\partial_1 \widehat D_{1,3}(u,v)=\left\{\begin{array}{crr}
3u^2 +3\theta u^2(1-u)^2(1-2u)& \text{for}&0\leq u < w\leq 1;\\
6uw+6\theta u(1-u)(1-2u)w(1-w)&&\\
 -3w^2-3\theta w^2(1-w)^2 (1-2u)& \text{for}&0\leq w <  u\leq 1.\\
\end{array}%
\right.$$
Hence, from \eqref{G1b}, we obtain
\begin{align*}
\bar G_{3|1}(t|t_1)&=\frac{6\bar F(t_1)\bar F(t)+6\theta \bar F(t_1)\bar F(t)F(t_1)F(t)(1-2\bar F(t_1))}{3\bar F^2(t_1)+3\theta \bar F^2(t_1)F^2(t_1)(1-2\bar F(t_1)) }\\
&\quad 
-\frac{3\bar F^2(t)+3\theta \bar F^2(t)F^2(t)(1-2\bar F(t_1))}{3\bar F^2(t_1)+3\theta \bar F^2(t_1)F^2(t_1)(1-2\bar F(t_1)) }
\end{align*}
for $t\geq t_1$ (one elsewhere). As in Example \ref{ex4}, we do not have an explicit expression for its inverse. Therefore, Figure \ref{fig5}, right, is plotted by using level curves (contour plots).

\begin{figure}%[h!]
	\begin{center}
		\includegraphics*[scale=0.338]{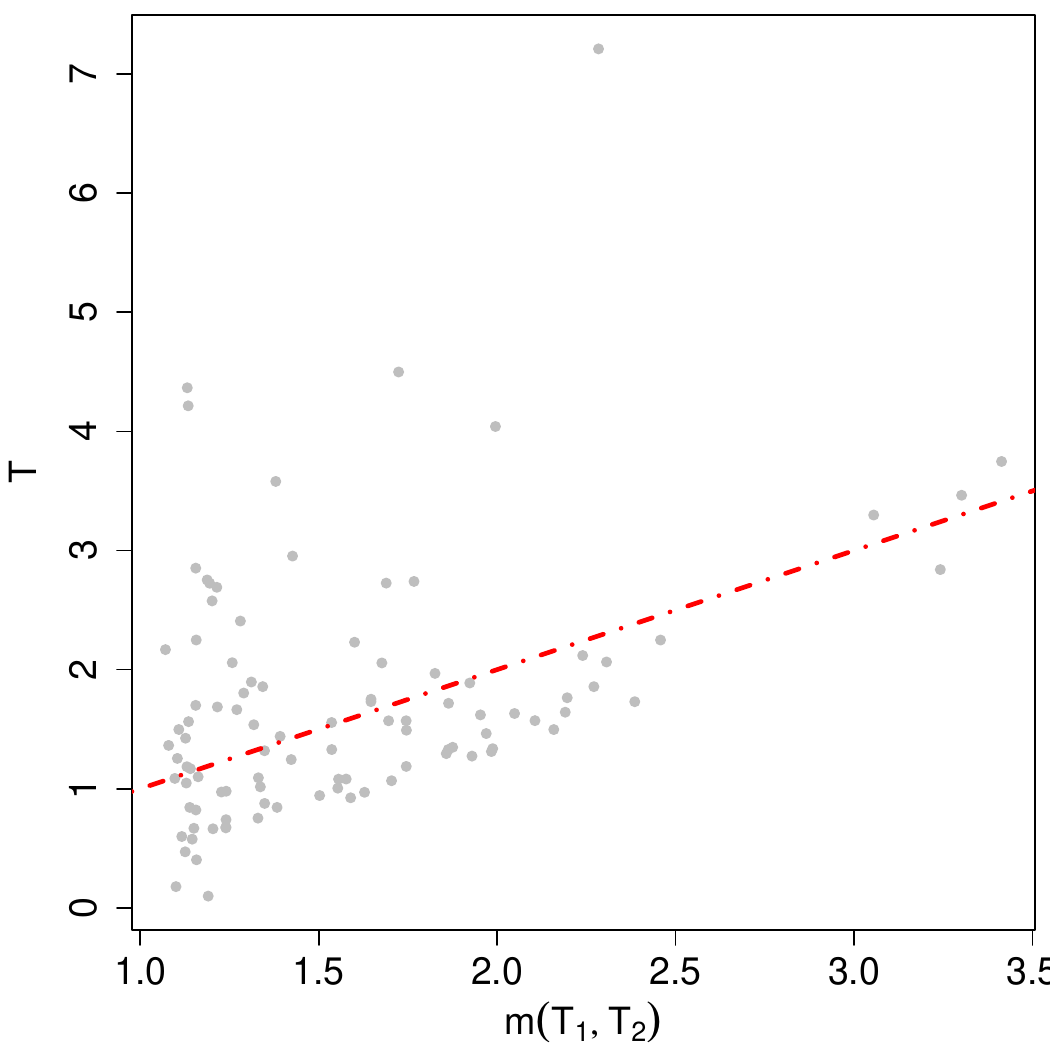}
        \includegraphics*[scale=0.338]{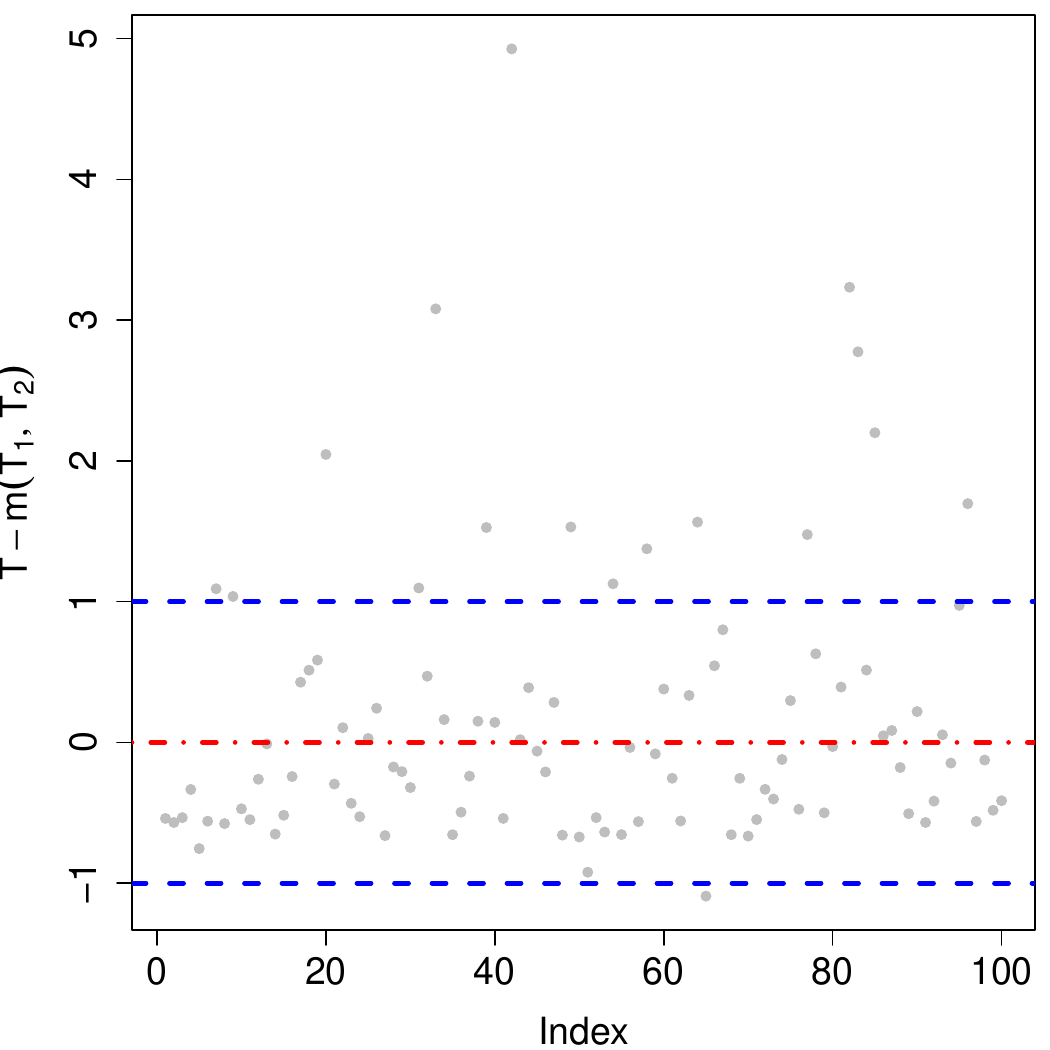}
		\caption{Comparisons (left) between the predictions from the median regression map $m(T_1,T_2)$ \textcolor{blue}{(red-dot-dashed line)} and the exact system lifetime values $T$ (gray points) for the systems in Example \ref{ex6} in a simulated sample. The bias $T-m(T_1,T_2)$ for that predictions are in the right plot.} \label{fig6}
	\end{center}
\end{figure}

For example, in the first data in our simulated sample, the first component failure happens at time $t_1=0.4632196$. At this point, our prediction for $T$ is
$$\bar G^{-1}_{3|1}(0.5|0.4632196)= 1.6585,$$
with the $90\%$ centered prediction interval $[0.7117,4.0781]$. The exact value for $T$ in this first data is $T=0.8434573$ which belongs to this interval. 

The second component failure happens at $t_2=0.6899807$. With this additional information,  the prediction for $T$ is 
$$m(0.4632196,0.6899807)=\bar G^{-1}_{3|1,2}(0.5|0.4632196,0.6899807)
=1.383333,$$ which is slightly better than the first prediction.
The $90\%$ centered prediction interval for this second prediction is $[0.7412945,3.686103]$, which also contains the real value of $T$. Furthermore, note that its length is smaller than the one of the intervals obtained at $t_1$.  The prediction in the IID case was $1.383128$ with the prediction interval $[ 0.7413, 3.6857]$. These values are very similar since the FGM copula with $\theta=1$ represents a weak positive dependence. 

In Figure \ref{fig6} (left), we can see the predictions from $T_1$ and $T_2$ \blue{(red-dot-dashed line)} and the exact values for $T$ \blue{(gray points)} jointly with the associated bias (right). The bias is greater than $1$ in $16$ cases and less than $-1$ just in one case (i.e. in $83$-out-of-$100$ cases the absolute errors are less than one). Hence, our predictions are conservative (when the value of $T$ is far from our predictions, it is usually greater than our predictions). In this sample, the mean absolute error is $0.66015$ and the mean squared error is $0.983492$.

\end{example}

\section{Conclusions}\label{Sec5}

We have provided helpful procedures to predict system failure times from early components' failure times. We have considered the most typical cases: a premature failure that always happens before the system failure (case I), the same assumption with a possible failure of the system at this point (case II), and two early failures previous to the system failure (case III). All the cases are solved using multivariate distortions and quantile regression techniques for independent and dependent components.  Other cases can be solved similarly  using the \blue{representation results provided in \cite{Alfonso,N18,NAS19,NB22, NC19,ND17,NPL17}.} We include illustrative examples by applying these theoretical results to specific cases and showing how we can proceed in practice by using empirical quantile regression tools.   

All the studied cases are based on the assumption of homogeneous  (identically distributed)  components. Therefore, the main task for future research is to extend these results to systems with heterogeneous components. Another further research is to develop parametric inference procedures for specific systems, reliability functions  and copulas to estimate the unknown parameters in the model (i.e., the parameters in the reliability function and/or the survival copula), see e.g. \cite{MN22,YNB16} and the references therein. 

\section*{Acknowledgements}

\blue{We would like to thank the anonymous reviewers for several helpful suggestions that have served to add clarity and breadth to the earlier version of this paper.}
% The authors thank
JN thanks the support of Ministerio de Ciencia e Innovaci\'on of Spain under grant PID2019-103971GB-I00/AEI/10.13039/501100011033. AA and AS
thank the partial support of Ministerio de Econom\'ia y Competitividad of Spain under grant PID2020-116216GB-I00, by the 2014-2020 ERDF Operational Programme and by the Department of Economy, Knowledge, Business, and University of the Regional Government of Andalusia, Spain under grant FEDER-UCA18-107519. The authors JN and AA state that this manuscript is part of the project TED2021-129813A-I00 and they thank the support of MCIN/AEI/ 10.13039/501100011033 and the European Union ``NextGenerationEU''/PRTR.

%The authors declare that they do not have conflict of interest.

\section*{Declaration of competing interest}

The authors declare that they have no known competing financial interests or personal relationships that could have appeared to influence the work reported in this paper.
  %\red{faltan los vuestros}

\end{document}